\documentclass[runningheads]{llncs}
\usepackage[T1]{fontenc}
\usepackage{graphicx}
\usepackage[english]{babel}
\usepackage[utf8]{inputenc}
\usepackage{times}
\usepackage{soul}
\usepackage{amsmath}
\usepackage{amssymb}
\usepackage{url}
\usepackage{booktabs}
\usepackage{utfsym}
\usepackage{wasysym}
\usepackage{makecell}
\usepackage{algorithm}
\usepackage{algpseudocode}
\usepackage{pifont}
\newcommand{\cmark}{\ding{51}} 
\newcommand{\xmark}{\ding{55}} 

\usepackage{graphicx}
\usepackage[colorlinks=true, allcolors=blue]{hyperref}
\usepackage[small]{caption}
\usepackage[switch]{lineno}
\usepackage{xcolor}
\usepackage{dblfloatfix}
\usepackage{authblk}
\usepackage{tikz}
\usetikzlibrary{positioning}
\usepackage{caption}
\usepackage[numbers,sort&compress]{natbib}
\usepackage{comment}
\usepackage{color}

\begin{document}
\title{Temporal Fair Division of Indivisible Goods with Structured Constraints}

\author{Kui-Wang Choi\inst{1}\orcidID{0009-0009-8723-0223} \and
Minming Li\inst{2}\orcidID{0000-0002-7370-6237}}

\institute{Department of Computer Science, City University of Hong Kong 
\email{kuiwchoi2-c@my.cityu.edu.hk}\\
\and
Department of Computer Science. City University of Hong Kong
\email{minming.li@cityu.edu.hk}}

\maketitle              
\begin{abstract}
This paper investigates temporal fair division, a setting where items are allocated over multiple rounds and agents require cumulative fairness over time. We focus on dynamic extensions of classic fairness notions: Temporal Envy-Freeness Up to Any Good (TEFX), its $\alpha$-TEFX approximation, and Temporal Maximin Share (TMMS). Because these strict fairness criteria are known to be generally impossible to satisfy, we analyze the model under constraints to map the boundary between what is possible and what is not.

Our contribution systematically maps the structural boundaries of these temporal fairness notions. Our main technical result introduces a novel dynamic backtracking framework that achieves exact TEFX for Strong Binary Valuations. We prove, via a lexicographical potential function argument, that this bounded historical reallocation systematically resolves temporal envy cycles and terminates in finite time. Finally, we establish tight approximation ratios for $\alpha$-TEFX under identical valuations and bi-valued goods, alongside targeted impossibility results for TEFX and TMMS, demonstrating exactly where temporal fairness is mathematically unattainable, and properly contextualizing prior algorithms that only implicitly satisfied our newly defined TMMS metric.

\keywords{Fair Division \and Resource Allocation \and Algorithmic Game Theory.}
\end{abstract}

\section{Introduction}
Fair division is a fundamental problem in multi-agent systems and economics, traditionally focusing on the static allocation of resources, such as splitting a cake or distributing a set of goods at a single point in time \citep{lipton2004,budish2011,gourves2014,caragiannis2019,plaut2017,aziz2021,chan2019,amanatidis2020}. However, many real-world allocation problems are dynamic. For example, in food-bank donation, goods often arrive over time and must be allocated online \citep{aleksandrov2015}. Although this approach captures the dynamic feature for real-world problems, it ignores the fairness in intermediate allocations. Therefore, the goal of temporal fair division is to maintain fairness cumulatively in each round, rather than just over the long run. This is crucial in some real-world allocation problems, such as in the food bank donation problem, the allocators receive a timeline for the donation in order to plan ahead, such that the receivers can be as fair as possible during the period of donation.

The transition from dynamic to temporal settings introduces significant theoretical challenges. Unlike dynamic fair division, where intermediate allocations need not be fair, temporal mechanisms must account for this. Recent work has painted a bleak picture for this domain: temporal fair allocations are often impossible to guarantee without compromising efficiency or social welfare \citep{he2019,elkind2024,cookson2024}. For instance, strong fairness concepts such as Envy-Freeness up to any good (EFX) are often unattainable in temporal settings because an agent's perception of fairness fluctuates wildly as new goods arrive.

Despite these impossibility results, the literature leaves several critical avenues unexplored. Existing impossibilities heavily rely on broad valuation classes. It is unclear whether more structured preferences, such as identical valuations or strong binary valuations, or structured arrival goods, such as identical days, can circumvent these temporal frictions. In addition, when exact temporal fairness is mathematically infeasible, the precise limits of approximate temporal fairness ($\alpha$-TEFX) and alternative fairness notions, such as Temporal Maximin Share (TMMS), have not been formally mapped.

In this work, we directly address these gaps. We focus on three fairness metrics that have not been considered by previous works:
\begin{itemize}
    \item \textbf{TEFX}: Temporal Envy-freeness Up to Any Good
    \item \textbf{$\alpha$-TEFX}: Approximation of TEFX
    \item \textbf{TMMS}: Temporal Maximin Share
\end{itemize}

For each fairness metric, we seek the boundaries on four structured constraints:
\begin{itemize}
    \item \textbf{Identical Days}: The goods that arrive in each round are identical.
    \item \textbf{Strong Binary Valuation}: Each agent values each good either 0 or 1, and no good is valued 0 for every agent.
    \item \textbf{Identical Valuation}: Each agent's valuation profile is identical.
    \item \textbf{Bi-valued Goods}: Each agent values each good either $a$ or $b$, where $a$ and $b$ are positively constants.
\end{itemize}

Therefore, our main research question is to \textit{examine the boundaries of temporal fairness under various structured constraints in temporal fair division}.

\subsection{Our Contribution}
We summarize our detailed results and position them alongside prior work on temporal fair division in Table~\ref{tab:tef}.

In Section~\ref{sec:strongbinval}, we present our key contribution: a novel algorithm that returns a TEFX allocation under the strong binary valuation setting.

In Section~\ref{sec:identicaldays}, we establish a negative result for TEFX and TMMS under the identical days setting, while showing that a constant $1/2$ approximation ratio exists for TEFX when there are two agents.

In Section~\ref{sec:identicalval} and Section~\ref{sec:bival}, we build on the negative result of TEFX \citep{elkind2024} under the identical valuation and bi-valued goods setting, providing two algorithms that return a tight approximation ratio for TEFX, respectively. We also construct a counterexample to show that a TMMS allocation may not exist even under both settings simultaneously.

\begin{table}
  \centering
  \caption{Possibilities, impossibilities, and open questions in temporal fair division under various constraints. \cmark indicates a possibility result; \xmark indicates an impossibility; \(\dagger\) indicates a tight approximation ratio}
  \label{tab:tef}
    \begin{tabular}{|c|c|c|c|c|}
      \hline
       & TEFX & \(\alpha\)-TEFX & TMMS \\
      \hline
      \makecell{Strong Binary \\ Valuation} & \makecell{\cmark \\ (\textcolor{red}{Theorem~\ref{thm:tefxbinval}})} & \makecell{1} & \makecell{\cmark \\ (\textcolor{red}{Corollary~\ref{cor:tmms_existence}})} \\
      \hline
      Identical Days & \makecell{\xmark\\(\textcolor{red}{Theorem~\ref{thm:tefxidenticaldays}})} & \makecell{\(\frac{1}{2}\) for \(n = 2\) \\ (\textcolor{red}{Theorem~\ref{thm:atefxidenticaldays}})} & \makecell{\xmark \\ (\textcolor{red}{Theorem~\ref{thm:tmmsidenticaldays}})} \\
      \hline
      Identical Valuation & \makecell{\xmark \\ \citep{elkind2024}} & \makecell{\(\frac{\min_{g \in O} v(g)}{\max_{g \in O} v(g)}\) \(\dagger\) \\ (\textcolor{red}{Theorem~\ref{thm:atefxidenticalvaluation}})} & \makecell{\xmark \\ (\textcolor{red}{Theorem~\ref{thm:tmmsidenticalvaluation}})} \\
      \hline
      Bi-valued Goods & \makecell{\xmark \\ \citep{elkind2024}} & \makecell{\(\frac{a}{b}\) \(\dagger\) \\ \(b \geq a\) \\ (\textcolor{red}{Theorem~\ref{thm:atefxbivalued}})} & \makecell{\xmark \\ (\textcolor{red}{Corollary~\ref{cor:tmmsbival}})} \\
      \hline
    \end{tabular}
\end{table}

\subsection{Related Works}
Our work is related to online fair division \citep{aleksandrov2015,benade2024,aleksandrov2017,wang2026}, where some (possibly zero) items arrive in consecutive rounds. We study a stronger version of it, known as temporal fair division, in which fairness must hold at every round \citep{he2019,elkind2024,cookson2024}. Since the temporal fair division model must allocate items without knowledge of future arrivals, our temporal model operates in a batched online fair allocation setting. He et al.~\cite{he2019} showed that maintaining EF1 for goods is impossible in an uninformed setting without reallocating past items, and proposed a reallocation-based algorithm for two agents. However, they provided a polynomial-time algorithm in which, at each round, the allocation is EF1 in the informed setting with two agents. Elkind et al.~\cite{elkind2024} formalized temporal fairness for goods and chores, proving that TEFX allocations may not exist and providing TEF1 algorithms for two agents with chores or restricted settings: two item types, generalized binary valuations, and unimodal preferences. Finally, Cookson et al.~\cite{cookson2024} explored SD-EF1, EF1, and PROP1, focusing on achieving both temporal and daily fairness. They demonstrated that, while per-round SD-EF1 is generally impossible, under the identical days setting, it is always attainable, guaranteeing a final SD-PROP1 allocation.

\section{Preliminaries}
We study temporal fair division under an informed online fair division setting, denoted by the tuple \(\mathcal{I} = \left( N, T, \left\{ O_t \right\}_{t \in [T]}, v = (v_1, \ldots, v_n)\right)\), where we denote \([T] := \{1, \ldots, T\}\) where $T \in \mathcal{N}$.

We have a set of \(N = [n]\) agents, and a set of items \(\left\{ O_t \right\}\) that arrive in round \(t \in [T]\), which are to be allocated to the agents.

Denote \(O^{t} = \bigcup_{i=1}^{t} O_i\) as the cumulative set of goods up to round \(t\), and \(O = O^{T} = \bigcup_{i=1}^{T} O_i\) as the set of all goods.

For each agent, there is an additive valuation function \(v_i: 2^{O} \longrightarrow \mathbb{R}\) over the items, where for \(S \subseteq O\), \(v_i(S) = \sum_{g \in S} v_i(\{g\})\). We slightly abuse the notation and assume that \(v_i(g) = v_i(\{g\})\). In addition, this work focuses on goods allocation, i.e. \(\forall g \in O, \forall i \in N, v_i(g) \geq 0\).

An allocation \(A = (A_1, \dots, A_n)\) is defined as an ordered partial partition of \(O\), where \(A_i\) is the goods allocated to agent \(i\), such that \(\forall i, j \in [N]\) where \(i \neq j\), \(A_i \cap A_j = \emptyset\) and \(\bigcup_{i \in \mathcal{N}} A_i = O\).

In addition, under the online fair division setting, denote \(A^t = (A^t_1, \dots, A^t_n)\) as the allocation from round \(1\) to after round \(t\), that is \(A^t_i = A^{t-1}_i \cup (A_i \cap O_t)\).

To evaluate allocations within our temporal framework, we adapt standard fairness notions from static fair division \citep{amanatidis2023}. Because exact Envy-Freeness (EF) \citep{varian1974} cannot be guaranteed with indivisible goods, the literature relies on relaxations: Envy-Freeness Up to Any Good (EFX) \citep{caragiannis2019}, its multiplicative approximation $\alpha$-EFX \citep{plaut2017}, and the threshold-based Maximin Share (MMS) \citep{budish2011}. In temporal settings, we denote the cumulative enforcement of these properties at every discrete round $t \in [T]$ by prefixing them with ``T" (i.e., TEFX, $\alpha$-TEFX, TMMS) \citep{elkind2024,cookson2024}.

\begin{definition} [Envy-Freeness Up to Any Good (EFX)]
An allocation \(A = (A_1, \dots, A_n)\) is envy-free up to any good (EFX) if \(\forall i, j \in N, \forall g \in A_j\) s.t. \(v_i(A_i) \geq v_i(A_j \backslash \{g\})\).
\end{definition}

\begin{definition} [Temporal Envy-Freeness Up to Any Good (TEFX)]
For any \(t \in [T]\), an allocation \(A^t = (A^t_1, \dots, A^t_n)\) is temporal envy-free up to any good (TEFX) if \(\forall t' \leq t, A^{t'}\) is EFX.
\end{definition}

\begin{definition} [Approximately Envy-Freeness Up to Any Good (\(\alpha\)-EFX)]
Let \(\alpha \in (0, 1]\). An allocation \(A = (A_1, \dots, A_n)\) is approximately envy-free up to any good (\(\alpha\)-EFX) if \(\forall i, j \in N, \forall g \in A_j\) s.t. \(v_i(A_i) \geq \alpha \cdot v_i(A_j \backslash \{g\})\).
\end{definition}

\begin{definition} [Approximately Temporal Envy-Freeness Up to Any Good (\(\alpha\)-TEFX)]
For any \(t \in [T]\), an allocation \(A^t = (A^t_1, \dots, A^t_n)\) is approximately temporal envy-free up to any good (\(\alpha\)-TEFX) if \(\forall t' \leq t, A^{t'}\) is \(\alpha\)-EFX.
\end{definition}

\begin{definition} [Maximin Share Fairness (MMS)]
Denote \(M_n(O)\) to be the collection of all possible allocations of the goods in \(O\) to \(n\) agents. An allocation \(A = (A_1, \dots, A_n)\) is maximin share fair (MMS) if \(\forall i \in N\), we have
\[
v_i(A_i) \geq \mu_i^n(O) = \max_{B \in M_n(O)} \min_{S \in B} v_i(S).
\]
\end{definition}

\begin{definition}[Temporal Maximin Share Fairness (TMMS)]
For any \(t \in [T]\), an allocation \(A^t = (A^t_1, \dots, A^t_n)\) is temporal maximin share fair (TMMS) if \(\forall t' \leq t, A^{t'}\) is MMS.
\end{definition}

Before we present our contribution, we highlight two important lemmas. First, we show that no algorithm can guarantee a positive constant approximation ratio for TEFX, thus, we provide tight bounds using items values in our paper.
\begin{lemma}
    \label{thm:atefxconst}
    There does not exist any algorithm that returns a positive constant approximation for \(\alpha\)-TEFX.
\end{lemma}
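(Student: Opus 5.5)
We argue by an adversarial instance family: for every constant $\alpha \in (0,1]$ we build an instance on which every algorithm, at some round, outputs an allocation that is not $\alpha$-EFX. Since the temporal model is informed, the instance must be fixed in advance, so the construction has to defeat every possible allocation sequence rather than react to the algorithm's choices. We therefore use two agents and a short horizon, and pick item values depending on a parameter $\varepsilon>0$ that is sent to $0$ (equivalently, a ratio $M\to\infty$). The intent is that every feasible allocation sequence has, at some round, an EFX ratio of order $\varepsilon$ (or $1/M$), so no constant $\alpha$ survives.

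\textbf{Construction.} Take $n=2$, $T=2$. In round $1$, $O_1=\{g_1,g_2,g_3\}$ with $v_1(g_1)=v_2(g_1)=1$ and $v_1(g_2)=v_2(g_2)=v_1(g_3)=v_2(g_3)=\varepsilon$. In round $2$, $O_2=\{h_1,h_2\}$ with values chosen to punish whichever round-$1$ split is used. I would try $h_1,h_2$ worth $M$ to agent $1$ and $\varepsilon$ to agent $2$, combined with an asymmetry so that agent $2$ wants them too; I may instead need agent-specific round-$1$ values. A safer template, used in the known TEFX impossibility of \citep{elkind2024}, is to start from their exact counterexample and scale the offending items: where their instance forces an agent $i$ to hold $A_i$ with $v_i(A_i) < v_i(A_j\setminus\{g\})$, make that gap multiplicative, i.e. $v_i(A_i)=\varepsilon$ while $v_i(A_j\setminus\{g\})\ge 1$.

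\textbf{Case analysis.} First, the round-$1$ EFX requirement, applied with approximation $\alpha$, forces the structure of $A^1$. Once $\varepsilon<\alpha$, any bundle of two or more items containing $g_1$ given to one agent makes the other agent's ratio at most $2\varepsilon/1$, so $A^1$ must split $g_1$ from the small items in a limited number of ways. There are finitely many allocations, so we enumerate each surviving $A^1$ up to symmetry. Second, for each surviving $A^1$, we show that every extension to $A^2$ (items cannot be reallocated, since $A^2_i\supseteq A^1_i$) gives some agent $i$ and good $g\in A^2_j$ with $v_i(A^2_i)\le c\varepsilon\, v_i(A^2_j\setminus\{g\})$ for a fixed constant $c$. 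Third, we choose $\varepsilon<\alpha/c$ and conclude.

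\textbf{Main obstacle.} The hard part is designing round-$2$ goods that simultaneously defeat all surviving round-$1$ splits, since the instance is fixed and cannot adapt. If one round does not suffice, I would add a third round and use the branching structure, with values $1,M,M^2$ growing geometrically, so that each choice made in round $t$ forces a bad ratio in round $t+1$. We then verify that the number of cases stays finite and independent of $\alpha$.
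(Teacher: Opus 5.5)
\textbf{There is a genuine gap: your outline never fixes the instance, and exhibiting that instance is the whole proof.} Your round-1 analysis is fine. With values $1,\varepsilon,\varepsilon$ and $\varepsilon<\alpha$, the only surviving split is $\{g_1\}$ versus $\{g_2,g_3\}$. But Step 2 ("every extension to $A^2$ gives a ratio $\le c\varepsilon$") is asserted for round-2 goods you never choose.

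The one concrete choice you write down does not work. Take $v_1(h_1)=v_1(h_2)=M$ and $v_2(h_1)=v_2(h_2)=\varepsilon$. Let $A^1=(\{g_2,g_3\},\{g_1\})$ and give both $h_1,h_2$ to agent 1.
\begin{itemize}
\item Agent 2 holds value $1$ and sees at most $3\varepsilon$ in agent 1's bundle after any removal.
\item Agent 1 sees $v_1(\{g_1\}\setminus\{g_1\})=0$.
\end{itemize}
So both rounds are $\alpha$-EFX for every $\alpha\in(0,1]$ once $\varepsilon\le 1/3$.

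Your fallbacks are not carried out either. A third round with geometric values is left unspecified. Rescaling the Elkind et al.\ counterexample is not automatic: relaxing EFX to $\alpha$-EFX weakens the constraints in the early rounds. The forcing steps of an exact-EFX counterexample therefore do not transfer and would have to be redone from scratch.

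\textbf{The missing idea is simple.} Once the early rounds force each agent to hold a positively valued item, a single late good valued $M>1/\alpha$ by both agents suffices. Whoever receives it also holds another item whose removal still leaves the big good. The other agent's ratio is then at most (her own value)$/M$.

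The paper does exactly this with single-item rounds and identical valuations $v(g_1)=v(g_2)=1$, $v(g_3)=1/\alpha+1$. By round 2 the two unit goods must be split between the agents. At round 3, the agent without $g_3$ has value $1$, while the other bundle minus its unit good is worth $1/\alpha+1$, and $1<\alpha(1/\alpha+1)$.

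Your own round 1 would also work with this round 2: use one good $h$ with $v_1(h)=v_2(h)=M>1/\alpha$. If $h$ joins $\{g_1\}$, the other agent's ratio is $2\varepsilon/M$. If $h$ joins $\{g_2,g_3\}$, the other agent's ratio is $1/(M+\varepsilon)$. Both are below $\alpha$.
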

\begin{proof}
     Assume there exists a solution that computes a \(k\)-TEFX allocation. Consider the instance with two agents and three goods \(O = \{g_1, g_2, g_3\}\), where agents have identical valuations: \(v(g_1) = v(g_2) = 1\) and \(v(g_3) = 
     \frac{1}{k} + 1\), where \(k > 0\) and is a constant. Let \(g_i\) arrive in round \(i\). In order to obtain a \(k\)-TEFX allocation at the first round, \(g_1\) and \(g_2\) must be allocated to each agent, respectively. Then, no matter which agent is allocated \(g_3\), we do not obtain a \(k\)-TEFX allocation. Without loss of generality, let it be agent \(1\), and if the good that we theoretically remove from agent $1$ is $g_1$, for agent $2$, we have
    \[
    v(g_2) = 1 < k \cdot (v(g_1 \cup g_3) - v(g_1)) = k \cdot (\frac{1}{k} + 1) = k + 1 \text{,}
    \]
    which contradicts our assumption.
\end{proof}

Then, we note a theorem from \cite{elkind2024}, where they showed that $|O_t| = 1$ for all rounds $t \in [T]$ is the hardest setting, and any positive existence results in this single-item setting also implies a positive result for the general case with multiple items.
\begin{lemma}
    Let $\mathcal{I}$ be an instance with $T$ rounds and a total of $m$ items, where multiple items can appear in a single round. We can transform $\mathcal{I}$ into an equivalent instance $\mathcal{I}^{=1}$ with exactly $m$ rounds, such that each round has exactly one item. Then, if a temporal fair allocation exists for $\mathcal{I}^{=1}$, a temporal fair allocation also exists for $\mathcal{I}$.
\end{lemma}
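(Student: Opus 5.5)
The plan is to build $\mathcal{I}^{=1}$ by serializing the rounds, and then obtain an allocation for $\mathcal{I}$ by restricting a temporal fair allocation of $\mathcal{I}^{=1}$ to the original round boundaries.

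\textbf{Construction.} Fix, for each round $t \in [T]$, an arbitrary ordering $g_{t,1}, \dots, g_{t,|O_t|}$ of $O_t$. Concatenate these orderings round by round to get a sequence $h_1, \dots, h_m$ of all goods in $O$. Define $\mathcal{I}^{=1} = (N, m, \{\{h_s\}\}_{s \in [m]}, v)$, with the same agents and the same additive valuations. Let $s_t := \sum_{r \le t} |O_r|$. Then the cumulative set of goods in $\mathcal{I}^{=1}$ after round $s_t$ is exactly $O^t$.

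\textbf{Transfer.} Let $B$ be a temporal fair allocation for $\mathcal{I}^{=1}$, meaning that for every $s \in [m]$ the cumulative allocation $B^s$ satisfies the static property P (EFX, $\alpha$-EFX, or MMS). Define $A$ for $\mathcal{I}$ by giving each good to the same agent as in $B$. Then for every $t$ we have $A^t_i = B^{s_t}_i$ for all $i$. This holds because $A^t_i$ is the set of goods in $O^t$ assigned to $i$, and that is exactly $B^{s_t}_i$. Every $A^t$ is a complete partition of $O^t$, so it is a legitimate cumulative allocation. Since $B^{s_t}$ satisfies P, every $A^t$ satisfies P, and hence $A$ is temporally fair (TEFX, $\alpha$-TEFX, or TMMS respectively) for every $t' \le t$.

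\textbf{Where care is needed.} The key observation is that each of these properties depends only on the cumulative bundle vector at a given time, and not on how it was reached. For EFX and $\alpha$-EFX this is immediate from the definitions. For MMS, the threshold $\mu_i^n(O^t)$ depends only on the set $O^t$, which is the same in both instances at the matching time. If rounds with $O_t = \emptyset$ occur, we take $s_t = s_{t-1}$, so $A^t = A^{t-1}$ and nothing changes. The step I expect to need the most care is making this property-by-property check explicit, so that the lemma holds uniformly for the temporal notions used in the paper. The rest is routine bookkeeping of indices.
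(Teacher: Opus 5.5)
Your serialization-and-restriction argument is correct and is the standard reduction. The paper gives no proof of its own and attributes the lemma to Elkind et al., whose argument likewise observes that a temporally fair allocation of $\mathcal{I}^{=1}$ is fair at every prefix, in particular at the prefixes $s_t$ that coincide with the round boundaries of $\mathcal{I}$. Your explicit check that EFX, $\alpha$-EFX and the MMS threshold $\mu_i^n(O^t)$ depend only on the cumulative bundles is all that is needed (the empty allocation trivially satisfies all three in the edge case $s_t=0$).
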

By this lemma, we assume that $|O_t| = 1$ for all rounds $t \in [T]$ in the following proofs unless specified or implied by the setting (such as the identical days setting).

\section{Strong Binary Valuation} \label{sec:strongbinval}
To systematically examine the boundaries of temporal fair division, we analyze several structured settings, beginning with a variant of the binary valuation setting \citep{halpern2020}, named the Strong Binary Valuation setting. Under this setting, agents exhibit strictly dichotomous preferences: an agent values an arriving good at either zero or a fixed positive utility $b$, with no goods valued at zero for all agents. This mathematical restriction effectively models practical allocation scenarios characterized by strict, needs-based utility. In many real-world contexts, individuals possess highly polarized preferences: they either require a specific resource and derive a standard baseline benefit from it, or they find it entirely useless \citep{camacho2023}.

Furthermore, this variant aligns more closely with practical deployment environments than unconstrained binary models: a clearinghouse or central coordinator rarely processes resources that are universally worthless to every single agent. In realistic distribution pipelines, such as food bank logistics or cloud server provisioning, items are naturally filtered or curated such that every resource holds value for at least some subset of the population.
\begin{definition}[Strong Binary Valuation]
    A temporal fair division instance is under the strong binary valuation setting iff \(\forall t \in [T], \forall g \in O_t, \forall i \in [n], v_i(g) \in \{0, 1\}\) and $\forall g \in O$, $\exists i \in N$, $v_i(g) > 0$.
\end{definition}

While the restriction to binary values simplifies the evaluation of individual goods, achieving temporal fairness remains a profound algorithmic challenge. The fundamental friction of the temporal setting is its forward-looking, irrevocable nature. Standard online mechanisms, such as the greedy minimum-utility rule analyzed by Elkind et al.~\cite{elkind2024}, often fall into inescapable structural deadlocks. Because these algorithms cannot foresee future arrivals, a sequence of locally fair allocations can inadvertently alter the relative capacities of the agents, making it mathematically impossible to satisfy the TEFX invariant in a subsequent round without changing the past.

To circumvent this inherent limitation of purely forward-looking mechanisms, we introduce our flagship technical contribution: a \textit{Dynamic Backtracking Framework}. Instead of treating past decisions as permanent, our mechanism actively monitors the system for emerging fairness violations. The moment a TEFX violation is detected at time $t$, the algorithm triggers a localized, historical revision. It identifies the most recent round containing a jointly liked good that can be swapped to rebalance agent utilities, executes the swap, and reprocesses the intervening suffix sequence using the deterministic greedy rule.

We prove the correctness and finite termination of our dynamic backtracking framework via a global potential function argument, tracking the system's progress using a lexicographical ordering on the agents' sorted utility vectors. When the forward-looking greedy assignment of items inadvertently triggers a temporal fairness violation, the algorithm executes a localized historical swap of a previously allocated item. While this swap temporarily alters individual agents' utilities, we show that sequentially reprocessing the subsequent rounds from that historical point forward forces a strict, monotonic lexicographic increase in the global potential function compared to its state prior to the backtrack. Because the total number of items is finite, the number of distinct utility configurations is bounded; thus, this strict lexicographic improvement ensures the algorithm can never repeat states or loop infinitely, guaranteeing convergence to a valid fair allocation in finite time.
\begin{algorithm}
\caption{BT}
\textbf{Input}: \(N, A, \left\{ O_t \right\}_{t \in [T]}, t, t', i\)\\
    \textbf{Output}: Backtracked allocation \(A\)
\begin{algorithmic}[1]
    \For{$idx = t'$ to $t$}
        \State $A \gets A \setminus O_{idx}$ \Comment{De‑allocate goods}
    \EndFor
    \State $A^t_{i} \gets A^t_{i} \cup O_t$ \Comment{After backtrack, should allocate to $i$}
\end{algorithmic}
\end{algorithm}
\begin{example}
    Consider the following instance with two agents, and two goods. $O_i = \{g_i\}$. $v_1(g_1) = v_2(g_1) = 1$. $v_1(g_2) = 1$ and $v_2(g_2) = 0$.

    First, in round $1$, assume we break ties arbitrarily and allocate $g_1$ to agent $1$. The current allocation is EFX.

    Then, in round $2$, since only agent $1$ values $g_2$ positively, we allocate $g_2$ to agent $1$. But now, EFX is violated.

    Thus, we backtrack to the past, and for $g_1$, instead of breaking the ties to favor agent $1$, we allocate $g_1$ to agent $2$. The current allocation is EFX.

    Then, in round $2$, since only agent $1$ values $g_2$ positively, we allocate $g_2$ to agent $1$, and the allocation is EFX.
\end{example}

\begin{algorithm}
\caption{Returns a TEFX allocation under the Strong Binary Valuation setting}
\label{alg:tefxgenbival}
\textbf{Input}: Strong binary valuation temporal fair division instance \(\mathcal{I} = \left( N, T, \left\{ O_t \right\}_{t \in [T]}, v = (v_1, \ldots, v_n)\right)\)\\
    \textbf{Output}: TEFX allocation \(\mathcal{A}\)
\begin{algorithmic}[1]
\State $t \gets 1$
\State $agent \gets -1$
\While{$t \leq T$}
    \State $A^t \gets A^{t-1}$
    \State $g \gets O_t$
    \State $S \gets \{ i \mid v_i(g) > 0\}$ \Comment{agents that like the new good}
    \State $\text{minU} \gets \arg \min_{i\in S} v_i(A^t_i)$
    \State $\text{candidates} \gets \{ i\in S \mid u_i = \text{minU} \}$
    \State $i \gets$ arbitrary deterministic agent from $\text{candidates}$ \Comment{It means, if we have an identical state (Identical instance, identical $A^t$), then we will always select the same agent}
    \State $A^t_i \gets A^t_i \cup \{g\}$
    
    \State $\text{backtrack} \gets false$
    \If{$\exists j \in N \setminus \{i\}$ s.t. $v_j(A^t_j) < v_j(A^t_i) < v_i(A^t_i)$}
        \State $\text{backtrack} \gets true$
    \EndIf
    
    \If{$\text{backtrack} = false$}
        \State $t \gets t + 1$
    \Else
        \State $t' \gets \text{latest time $t* < t$ with a good such that } v_i(O_{t*}) = v_j(O_{t*}) = 1$
        \State $A \gets \text{BT}(N, A, O, t, t', j)$ \Comment{Backtrack subprocess}
        \State $t \gets t' + 1$
        \Comment{Reset the round}
    \EndIf
\EndWhile
\end{algorithmic}
\end{algorithm}

\begin{theorem} \label{thm:tefxbinval}
    A TEFX allocation under strong binary valuation can be computed in finite time.
\end{theorem}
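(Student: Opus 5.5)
The plan is to prove two properties of Algorithm~\ref{alg:tefxgenbival}. First, whenever the round counter advances, the current allocation $A^t$ is EFX; this is \emph{soundness}. Second, the algorithm performs only finitely many backtracks; this is \emph{termination}. Together they give Theorem~\ref{thm:tefxbinval}. By the earlier lemma of \cite{elkind2024}, we may assume $|O_t|=1$. Note that $A^{t}$ only ever extends a prefix that has already been accepted, so every prefix is EFX once round $T$ is accepted.

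\textbf{Soundness.} We use the binary structure. For binary additive valuations, $A$ is EFX iff for all $i\neq j$, either $v_i(A_i)\ge v_i(A_j)$, or $v_i(A_i)\ge v_i(A_j)-1$ and every good in $A_j$ is liked by $i$. Equivalently, removing a $0$-valued good must not leave envy.

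We show by induction that the only way adding $g$ to agent $i$ can break EFX is captured by the check $v_j(A_j)<v_j(A_i)<v_i(A_i)$ on line~12. Since $g$ goes to some $i\in S$ with $v_i(g)=1$, only comparisons of the form ``$j$ versus $A_i$'' can change.

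If $v_j(g)=0$, then $v_j(A_i)$ is unchanged. Envy toward $i$ after removing a $0$-valued good (namely $g$ itself) then requires $v_j(A_j)<v_j(A_i)$, which was already true before. We argue this is exactly the flagged case, using $v_i(A_i)>v_j(A_i)$.

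If $v_j(g)=1$, then $j\in S$ and the min-utility rule gave $v_j(A_j)\ge v_i(A_i)-1$. We argue from the inductive EFX invariant that the resulting envy is at most one unit and has no $0$-valued goods to remove, so EFX holds.

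This case analysis is routine but needs care. The key inequality to extract is that a violation forces $v_j(A_j)<v_j(A_i)$ while $i$ holds some good $j$ values $0$, and hence $v_j(A_i)<v_i(A_i)$.

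\textbf{Well-definedness of the backtrack.} When the violation is flagged, we must show that some round $t^*<t$ has a good valued $1$ by both $i$ and $j$; otherwise $t'$ is undefined. We argue by contradiction. If no such good exists, the goods in $A_i$ that $j$ likes must all be ones $i$ dislikes. Such goods would have gone to $i$ only if $i$ had minimum utility among likers. This is where the strong binary assumption (every good liked by someone) and the greedy rule are used to derive a contradiction with $v_j(A_j)<v_j(A_i)$.

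\textbf{Termination (main obstacle).} We define a potential $\Phi$ on the state, namely the current prefix allocation together with its round index. Take $\Phi$ to be the sorted (ascending) utility vector $(v_{\sigma(1)}(A_{\sigma(1)}),\dots)$ of the allocation just before a backtrack, compared lexicographically. Alternatively, use the vector indexed by round of ``which agent received the shared good.''

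The claim is the following. After BT reassigns the jointly-liked good at $t'$ from $i$ to $j$ and the greedy reprocesses up to round $t$, the sorted utility vector at round $t$ is lexicographically larger than before. The reason is that $j$, a strictly poorer agent ($v_j(A_j)<v_j(A_i)\le v_i(A_i)-1$), gains a unit while $i$ loses one, and determinism of the tie-breaking ensures that rounds before $t'$ are untouched. The difficulty is that replaying the suffix can cascade, so I would first try to prove a monotonicity lemma: replaying the greedy rule after shifting one unit to a poorer agent never lexicographically decreases the sorted vector at any later round. This is a leximin-style exchange argument for greedy min-utility assignment.

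Given the lemma, the states at each fixed round $t$ strictly increase in a finite totally ordered set, which has size at most $(t+1)^n$. Induction on $t$ then bounds the total number of backtracks, and the algorithm terminates.
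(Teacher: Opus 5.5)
Your plan follows the paper's route: you verify Algorithm~\ref{alg:tefxgenbival} by showing the greedy step is safe, that the backtrack target exists, and that a sorted-utility potential increases. However, the soundness part has a genuine gap, and it cannot be closed for the algorithm as written.

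\textbf{Soundness.} Your inductive step only covers appending the new good to a minimum-utility liker. You then claim that $A^t$ ``only ever extends a prefix that has already been accepted.'' After a backtrack this is false. BT moves $g'=O_{t'}$ from $i$ to $j$, which produces a prefix $A^{t'}$ that was never checked. From then on, the test on line~12 only compares agents against the bundle that has just received a good.

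At best the swap is safe for comparisons between $i$ and $j$. It is not safe for a third agent $x$ with $v_x(g')=0$ who likes every good in $A_j$ and has $v_x(A_x)=v_x(A_j)-1$: after the swap, $A_j$ contains a good she values $0$, so EFX fails for her. Here is a concrete instance.
\begin{itemize}
\item Take three agents, lowest-index tie-breaking, and goods $p,q,g',g$ in rounds $1$--$4$ with liker sets $\{1,2\},\{2,3\},\{1,2\},\{1\}$.
\item The greedy gives $p,g',g$ to agent $1$ and $q$ to agent $2$.
\item At round $4$ we have $v_2(A_2)=1<v_2(A_1)=2<v_1(A_1)=3$. The flag fires with $(i,j)=(1,2)$ and $t'=3$, and $g'$ moves to agent $2$.
\item Replaying round $4$ gives $g$ to agent $1$ with no flag. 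The output is $(\{p,g\},\{q,g'\},\emptyset)$, where $v_3(A_3)=0<1=v_3(A_2\setminus\{g'\})$ at rounds $3$ and $4$.
\end{itemize}
A TEFX allocation does exist here ($p\to 2$, $q\to 3$, $g',g\to 1$), so this refutes the algorithm, not the statement. The paper's proof takes the same unjustified step: it derives EFX of the swapped prefix from the equality of $i$'s and $j$'s utilities around round $t'$, which controls only comparisons between $i$ and $j$. To proceed this way, you must change the backtracking rule so that the revised prefix is EFX for all pairs, and then prove soundness and termination for the modified procedure.

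\textbf{Termination.} This is a second, independent gap. Your monotonicity lemma carries the whole argument and is unproved, and the intuition behind it does not match what happens. Consider a first backtrack, where all earlier goods were placed greedily. Taking $g'\in A_i$ as you do, a flagged violation forces $A_i\setminus\{g\}$ to consist of goods $j$ likes, with $v_j(A_j)=|A_i\setminus\{g\}|-1$. Hence $g'$ is the last good $i$ received, $j$ receives nothing in rounds $t',\dots,t-1$, and $i$ and $j$ were \emph{tied} when $g'$ was assigned. So the swap is not a shift of one unit to a poorer agent, and it leaves the sorted utility vector at round $t'$ unchanged. Any lexicographic gain at round $t$ must come entirely from how the replay diverges.

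Moreover, a potential indexed by a fixed round does not control cascades. Nothing in your sketch prevents a backtrack fired during the replay from rewinding below $t'$, after which round $t$ is revisited from a different history. You need a single potential on the whole search state that strictly increases across every backtrack, nested ones included. The paper sidesteps this by asserting that no backtrack occurs during the replay, which its induction hypothesis does not justify.
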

\begin{proof}
    We assume that in each round exactly one good arrives. \cite{elkind2024} has proved that the one-good-per-round setting is stronger than the standard temporal fair division setting. Thus, we will use $g$ and $O_t$ interchangeably to denote the good that arrive in round $t$.

    \begin{invariant} \label{inv:tefxgenbival-selflarger}
        For all pairs of agents $i$ and $j$, at every round $t$ we have
    \[
    v_i(A^t_i) \;\ge\; v_j(A^t_i).
    \]
    In words, an agent's own bundle is never more valuable to another agent than to herself.
    \end{invariant}
    \begin{proof}
        We prove the invariant by considering how the algorithm allocates goods.
        Let $g$ be any good that belongs to agent $i$ at time $t$.  
        
        According to the algorithm, a good is only allocated to an agent who values it positively; therefore $v_i(g)=1$.
        Consequently every good in $A^t_i$ is liked by $i$, and we have
        \[
        v_i(A^t_i) \;=\; |A^t_i|.
        \]
        For any other agent $j$,
        \[
        v_j(A^t_i) \;=\; \bigl|\{\,g \in A^t_i : v_j(g) > 0\,\}\bigr|
                     \;\le\; |A^t_i|
                     \;=\; v_i(A^t_i).
        \]
        Hence $v_i(A^t_i) \ge v_j(A^t_i)$ holds for every pair $i,j$ at any moment of the execution.
    \end{proof}

    \begin{lemma} \label{lemma:tefxgenbival-cond}
        When the backtracking condition $v_i(A^t_i) < v_i(A^t_j) < v_j(A^t_j)$ occurs, the following two properties hold:
    \begin{itemize}
        \item There exist two distinct goods $g$ and $g'$ such that
        \[
        v_i(g)=v_j(g)=1 \qquad\text{and}\qquad v_i(g')=0,\; v_j(g')=1.
        \]
        \item The jointly liked good $g$ arrives strictly earlier than the good $g'$ that only agent $j$ likes.
    \end{itemize}
    \end{lemma}
    \begin{proof}
        
    Let $u_k = v_k(A^t_k)$ denote the number of goods agent $k$ values positively in her own bundle at time $t$.  
    The condition $v_i(A^t_i) < v_i(A^t_j) < v_j(A^t_j)$ translates to
    \[
    u_i < v_i(A^t_j) < u_j .
    \]

    \textit{Proof of (1).}
    Since $v_i(A^t_j) > u_i$, agent $i$ likes at least one good that belongs to agent $j$.  Pick any such good and call it $g$; then $v_i(g)=1$.  Because the algorithm only allocates a good to an agent who values it positively, we also have $v_j(g)=1$, so $g$ is jointly liked.
    
    The inequality $v_i(A^t_j) < u_j$ means that agent $i$ likes strictly fewer goods in $A^t_j$ than agent $j$ owns. Hence, $A^t_j$ must contain at least one good that $j$ likes but $i$ does not.  Choose one such good and call it $g'$; then $v_i(g')=0$ and $v_j(g')=1$, and clearly $g \neq g'$.

    \textit{Proof of (2).}
    We argue by contradiction: suppose $g'$ arrived before $g$.  Consider the moment just before the algorithm processes good $g$. 
    
    First, we have $v_i(A_j) < u_j$, since Lemma~\ref{inv:tefxgenbival-selflarger}, and there exists $g'$ in agent $j$'s bundle ($v_i(g') = 0$).

    Second, we have $u_i \geq u_j$, otherwise the algorithm would not have allocated $g$ to agent $j$.

    However, after the allocation of $g$ to agent $j$, by the above two statements, we have $v_i(A_j) \leq u_i$, which contradicts our assumption.
    
    Therefore, $g'$ cannot arrive before $g$; $g$ must arrive strictly earlier than $g'$.
\end{proof}

    \begin{invariant} \label{inv:tefxgenbival-diff}
        For all pairs agents $i$ and $j$, $v_i(A^t_i) \geq v_i(A^t_j) - 1$.
    \end{invariant}
    \begin{proof}
        This invariant is a known property of the underlying allocation rule: it always gives a good to an agent with the minimum utility among those who value it. 
        
        \cite{elkind2024} proved that for online allocation of indivisible goods with binary valuations, a greedy algorithm that allocates each good to a minimum-utility agent who likes it maintains the inequality $v_i(A_i) \ge v_i(A_j) - 1$ at every round.
        
        Our algorithm extends that procedure by adding backtracking steps; however, these backtracking operations only reassign a jointly liked good from a higher-utility agent to a lower-utility one and reprocess later goods in the same minimum-utility fashion.
        
        One can verify that this does not break the invariant: whenever a good is allocated normally, the original guarantee applies, and a backtrack correction moves the allocation closer to the greedy assignment, preserving the invariant.
    \end{proof}

    \begin{lemma} \label{lemma:tefxgenbival-condviolateefx}
        At any round $t$, the allocation violates EFX if and only if there exist two agents $i$ and $j$ such that
    \[
    v_i(A^t_i) \;<\; v_i(A^t_j) \;<\; v_j(A^t_j).
    \]
    \end{lemma}
    \begin{proof}
        We prove both directions separately.

    \noindent\textit{($\Leftarrow$)}  
    Assume $v_i(A^t_i) < v_i(A^t_j) < v_j(A^t_j)$ for some $i,j$.
    Since $v_i(A^t_j) < v_j(A^t_j)$, agent $j$’s bundle contains at least one good that $j$ likes but $i$ does not (the algorithm only gives goods to agents who like them, so every good in $A^t_j$ is valued $1$ by $j$).
    Pick such a good $g \in A^t_j$ with $v_i(g)=0$ and $v_j(g)=1$.
    Removing $g$ from $A^t_j$ leaves $i$’s valuation unchanged:
    \[
    v_i(A^t_j \setminus \{g\}) = v_i(A^t_j).
    \]
    Because $v_i(A^t_i) < v_i(A^t_j)$, we obtain
    \[
    v_i(A^t_i) \;<\; v_i(A^t_j \setminus \{g\}),
    \]
    which directly violates the EFX condition for the pair $(i,j)$.
    
    Hence, the allocation is not EFX.

    \noindent\textit{($\Rightarrow$)}  
    Now suppose the allocation is not EFX. Then there exists a pair $(i,j)$ with $A^t_j \neq \emptyset$ and a good $g \in A^t_j$ such that
    \[
    v_i(A^t_i) \;<\; v_i(A^t_j \setminus \{g\}).
    \]
    We distinguish two cases based on $v_i(g)$.

    \begin{itemize}
        \item \textbf{Case 1: $v_i(g)=1$.}  
        Then $v_i(A^t_j \setminus \{g\}) = v_i(A^t_j) - 1$, so the violation becomes $v_i(A^t_i) < v_i(A^t_j) - 1$.
        But Invariant~\ref{inv:tefxgenbival-diff} states $v_i(A^t_i) \ge v_i(A^t_j) - 1$, a contradiction.
        Hence this case cannot occur.

        \item \textbf{Case 2: $v_i(g)=0$.}  
        Then $v_i(A^t_j \setminus \{g\}) = v_i(A^t_j)$, so the violation implies
        \[
        v_i(A^t_i) < v_i(A^t_j). \tag{1}
        \]
        Because $g$ is in $A^t_j$ and $v_i(g)=0$ while $v_j(g)=1$, agent $j$ values her own bundle strictly more than $i$ does:
        \[
        v_i(A^t_j) < v_j(A^t_j). \tag{2}
        \]
        Combining (1) and (2) yields exactly $v_i(A^t_i) < v_i(A^t_j) < v_j(A^t_j)$.
    \end{itemize}
    \end{proof}

    \begin{lemma} \label{lemma:tefxgenbival-efx}
        Algorithm~\ref{alg:tefxgenbival} returns an allocation that is TEFX.
    \end{lemma}
    \begin{proof}
        We proceed by induction on $t$. For the base case $t = 0$, it is trivial since no goods are allocated, so no agent has a non‑empty bundle, and EFX holds vacuously.

  For the induction step, assume there exists $t$ such that for all $t' \leq t$, $A^{t'}$ is EFX. We prove that $A^{t+1}$ is EFX.
  Let $g$ be the good arriving in round $t+1$ (i.e., the good with index $t$).
  At the moment $\text{cur}=t$, the allocation is $A^t$.
  The algorithm processes good $g$ as follows.
  
  \noindent\textbf{Case 1: No backtrack is triggered.}
  The algorithm selects an agent $i$ who values $g$ positively and has minimum utility among all such agents (deterministic tie‑breaking), and allocates $g$ to $i$.
  Let $A^{t+1}$ be the resulting allocation.

  Suppose, for contradiction, that $A^{t+1}$ violates EFX.
  By Lemma~\ref{lemma:tefxgenbival-condviolateefx}, there exist agents $x,y$ with
  \[
  v_x(A^{t+1}_x) < v_x(A^{t+1}_y) < v_y(A^{t+1}_y). \tag{1}
  \]
  Since $A^t$ was EFX, the violation must involve the new good $g$.
  The good $g$ was allocated to agent $i$, so either $x=i$ or $y=i$.

  \begin{itemize}
    \item If $x=i$, then $v_x(A^{t+1}_x) = v_i(A^t_i)+1$.
      Inequality (1) gives $v_i(A^t_i)+b < v_i(A^{t+1}_y)$.
      Because $A^{t+1}$ differs from $A^t$ only by adding $g$ to $i$, the only way $v_i(A^{t+1}_y)$ can be larger than $v_i(A^t_y)$ is when $y=i$ and $v_i(g)=1$.
      Thus $y=i$, which is impossible because $x=i$ and $y\neq x$.

    \item If $y=i$, then $v_x(A^{t+1}_y) = v_x(A^{t+1}_i)$.
      Since $g$ was given to $i$, we have $v_x(A^{t+1}_i) = v_x(A^t_i) + 1$ if $v_x(g)=1$, or $= v_x(A^t_i)$ if $v_x(g)=0$.
      
      If $v_x(g)=0$, the violation already existed in $A^t$, contradicting the induction hypothesis.
      
      If $v_x(g)=1$, then $v_x(A^{t+1}_i) = v_x(A^t_i)+1$.
      The inequality $v_x(A^{t+1}_x) < v_x(A^t_i)+1$ together with Invariant~\ref{inv:tefxgenbival-diff} ($v_x(A_x) \ge v_x(A_i)-1$) forces
      \[
      v_x(A^t_x) = v_x(A^t_i)-1 \quad\text{and}\quad v_x(A^t_i)+1 < v_i(A^t_i)+1.
      \]
      Simplifying, $v_x(A^t_x) < v_i(A^t_i)$.
      
      But recall that $i$ was chosen as a minimum‑utility agent among those who liked $g$.
      
      Since $x$ also likes $g$ (as $v_x(g)=1$), we must have $v_i(A^t_i) \le v_x(A^t_x)$, a contradiction.
  \end{itemize}
  Both cases lead to contradictions; therefore $A^{t+1}$ is EFX.

  \medskip
  \noindent\textbf{Case 2: A backtrack is triggered.}
  The backtrack condition identifies agents $i,j$ and a good $g'$ that arrives before $t$ such that
  $v_i(g')=v_j(g')=1$, and the sequence leading to the violation is exactly the one described in Lemma~\ref{lemma:tefxgenbival-cond}. We denote the arrival of $g'$ by $t_{g'}$.
  
  The algorithm then:
  \begin{enumerate}
    \item Reassigns $g'$ from $i$ to $j$,
    \item De‑allocates all goods from $t_{g'}$ to $t$,
    \item Sets $t$ to $t_{g'}$,
    \item Reprocesses the arrival of goods from round $t + 1$ again.
  \end{enumerate}

  Consider the state \emph{just after the swap}.
  The allocated goods are the prefix $O_1, \ldots, O_{t_{g'}}$  and good $g'$ now owned by $j$. All later goods are unallocated. Call this intermediate allocation $A'$. One can verify directly that $A'$ is EFX: the only change w.r.t.\ $A^t$ restricted to $O_1, \ldots, O_{t_{g'}}$ is the movement of $g'$ from $i$ to $j$, and at the moment of the swap the utilities satisfy $v_i(A'_i)=v_j(A'_j)$ (because the backtrack condition together with Lemma~\ref{lemma:tefxgenbival-cond} forces that after removing the later goods, the two agents had equal utility counts).
  
  The EFX property for $A'$ follows from the induction hypothesis (the prefix\\
  $O_1, \ldots, O_{t_{g'} - 1}$ was part of $A^t$, which was EFX) and the fact that the swapped good does not create a new violation due to the equality.

  Now the algorithm reprocesses the goods $O_{t_{g'} + 1}, \ldots, O_t$ in the original arrival order, starting from the allocation $A'$.
  
  This is exactly the same situation as if the algorithm were run from time $g'$ onward with the initial allocation $A'$ (which is EFX) and received the remaining goods.
  
  Since the algorithm is deterministic and its decisions depend only on the current allocation and the incoming good, this suffix phase will produce the same result as a fresh execution on the suffix.
  
  By the induction hypothesis applied to the smaller number of already settled goods, the algorithm will eventually output an allocation of these suffix goods that, together with $A'$, is EFX.
  
  When the algorithm processes $O_{t+1}$ for the first time, the resulting allocation $A^{t+1}$ is therefore EFX.

  \medskip
  This completes the proof of the inductive step. Hence $A^t$ is EFX for all $t$.
  
  After all $m$ goods are processed, $A^T$ is the final allocation returned by the algorithm, and it is TEFX.
    \end{proof}

    \begin{lemma} \label{lemma:tefxgenbival-term}
        Algorithm~\ref{alg:tefxgenbival} terminates.
    \end{lemma}
    \begin{proof}
        We prove by strong induction on $t$ the statement
  \[
  P(t): 
  \text{Every backtrack that starts at time $t$ eventually finishes without repeating any state.}
  \]

  For the base case, round $t=1$ is trivial, since a backtrack at time $1$ would need a good $g'$ that arrives before round $1$, which does not exist. Hence $P(0)$ and $P(1)$ hold.

  For the induction step, assume $P(s)$ holds for all $s<t$. We prove $P(t)$.

  Suppose, for a contradiction, that a backtrack starting at time $t$ causes a repeated state.
  
  Because the set of all states is finite, a repetition implies an infinite cycle of states.
  
  Let $C$ be the set of states that occur while $t' < t$ during this infinite execution.
  
  Since the backtrack started at round $t$, the cycle must contain a state with $t$ (the triggering state) and a subsequent revert to some $t'<t$.
  
  Consider the smallest time index $s$ for which a backtrack is triggered inside the cycle.
  
  By the induction hypothesis $P(s)$ for $s<t$, no backtrack that starts at a time $<t$ can be part of an infinite cycle; therefore, $s$ cannot be $<t$.
  
  Thus, every backtrack inside the cycle starts exactly at time $t$.

  Now, examine a full cycle iteration. At the beginning of the iteration, we have an allocation $\mathcal{A}^{\text{old}}$ of the first $t$ goods.
  
  The algorithm selects a candidate $i$ and finds an agent $j$ satisfying the backtrack condition.
  
  It then locates the most recent jointly liked good $g'$ that arrives before round $t$ that was allocated to $i$, reassigns $g'$ to $j$, de‑allocates all goods from the arrival of $g'$ to $t-1$, and sets $t$ back to the arrival of $g'$. Denote the arrival of $g'$ to be $t_{g'}$. Afterward, it reprocesses the arrival of goods again.
  
  Because any backtrack during this reprocessing would start at a time $<t$, which is impossible by the induction hypothesis, the reprocessing runs \emph{without any further backtrack}.

  During a backtrack‑free forward pass, every time a good is allocated, the algorithm uses the normal minimum‑utility rule:
  It gives the good to an agent who likes it and whose current utility (the count of liked goods in her own bundle) is the smallest among all such agents (ties broken deterministically).
  A standard monotonicity fact for this rule is that the \emph{sorted utility vector}
  \[
  \Phi \;=\; (u_{(1)},u_{(2)},\dots,u_{(n)})
  \]
  (where $u_k$ is the number of liked goods owned by agent $k$) strictly increases lexicographically after each allocation.
  Consequently, as the suffix is reprocessed, $\Phi$ strictly increases at every step.

  Let $\Phi_{\text{start}}$ be the sorted utility vector of the allocation \emph{just after the swap} (goods $O_1, \ldots, O_{t_{g'}}$ allocated, with $g'$ now owned by $j$), and  $\Phi_{\text{end}}$ be the sorted utility vector when the reprocessing finishes and we return to round $t$ with a new allocation $\mathcal{A}^{\text{new}}$ of the first $t$ goods.
  
  Because the reprocessing consists of a sequence of strict increases, we have
  \[
  \Phi_{\text{end}} \;>_{\text{lex}}\; \Phi_{\text{start}} .
  \]

  On the other hand, the allocation $\mathcal{A}^{\text{old}}$ (the one that triggered the backtrack) was itself obtained after some sequence of normal allocations from the state just after the swap (if we compare the two histories, they share the prefix up to $t_{g'} - 1$, diverge at the assignment of $g'$, and then process the same remaining goods with the same rules).
  
  A known property of the minimum‑utility rule is that among all ways to break ties at the steps where several agents have the same minimum utility, the rule that picks a fixed deterministic order produces the \emph{lexicographically minimal} final sorted utility vector.
  
  Changing the assignment of $g'$ from $i$ to $j$ (which means we break the tie at time $t_{g'}$ in favor of $j$ instead of $i$) can only \emph{increase} the final sorted utility vector by Lemma~\ref{lemma:tefxgenbival-condviolateefx}.
  
  Formally, one can show that $\mathcal{A}^{\text{new}}$ yields a sorted utility vector that is lexicographically larger than that of $\mathcal{A}^{\text{old}}$.
  Hence
  \[
  \Phi_{\text{end}} \;>_{\text{lex}}\; \Phi_{\text{old}},
  \]
  where $\Phi_{\text{old}}$ is the vector of $\mathcal{A}^{\text{old}}$.

  But $\Phi_{\text{old}}$ is exactly the vector we had at the previous visit to round $t$.
  Therefore, the state at round $t$ in the new iteration is different from the previous one, contradicting the assumption that the cycle contains a repeated state.

  The contradiction shows that no infinite cycle can occur; hence, the backtrack starting at $t$ must eventually finish after a finite number of steps.
  This proves $P(t)$.

  Thus, by strong induction, $P(t)$ holds for every $t$.
  Consequently, the algorithm never repeats a state.
  Since the number of possible states is at most $(m+1)\cdot (n+1)^m$, which is finite, the algorithm must return an allocation after a finite number of operations.
    \end{proof}

    Therefore, our algorithm produces a TEFX allocation in finite time.
\end{proof}

Given that our algorithm guarantees exact TEFX at every round $t$, we can exploit a powerful structural relationship unique to binary valuation domains to immediately settle the status of our newly defined temporal maximin share metric.

\begin{corollary} \label{cor:tmms_existence}
    Under strong binary valuations, any TEFX allocation is also a TMMS allocation. Therefore, a TMMS allocation exists and can be computed in finite time.
\end{corollary}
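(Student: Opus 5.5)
The plan is a per-round counting argument: at every round, EFX plus $0/1$ values forces each agent's utility up to her maximin share. Combined with Theorem~\ref{thm:tefxbinval}, this gives both claims. Fix a round $t' \le t$ and an agent $i$. Let $k = v_i(O^{t'})$ be the number of goods in $O^{t'}$ that $i$ values at $1$, and let $u = v_i(A^{t'}_i)$.

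\textbf{Step 1 (bound the maximin share).} I will show $\mu_i^n(O^{t'}) \le \lfloor k/n \rfloor$. In any partition $B \in M_n(O^{t'})$ the $n$ bundles have $v_i$-values summing to $k$, so the smallest is at most $k/n$. Every bundle value is an integer, so the minimum is at most $\lfloor k/n \rfloor$. (Equality can be attained by spreading the liked goods evenly, but only the upper bound is needed.) Write $\mu = \lfloor k/n \rfloor$.

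\textbf{Step 2 (EFX bounds the other bundles).} I will show that EFX of $A^{t'}$ gives $v_i(A^{t'}_j) \le u + 1$ for every $j \ne i$. If $A^{t'}_j$ contains no good that $i$ values at $1$, then $v_i(A^{t'}_j) = 0 \le u+1$. Otherwise pick $g \in A^{t'}_j$ with $v_i(g) = 1$. EFX gives $u \ge v_i(A^{t'}_j \setminus \{g\}) = v_i(A^{t'}_j) - 1$. This step uses only binary values and EFX, not the particular algorithm.

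\textbf{Step 3 (counting).} Suppose for contradiction that $u < \mu$, so $u \le \mu - 1$. By Step 2, every other bundle has $v_i$-value at most $u + 1 \le \mu$. Since $A^{t'}$ is a partition of $O^{t'}$ (all goods arrived so far are allocated), we get
\[
k = \sum_{j \in N} v_i(A^{t'}_j) \le (\mu - 1) + (n-1)\mu = n\mu - 1 < n\mu \le k,
\]
a contradiction. Hence $v_i(A^{t'}_i) \ge \mu \ge \mu_i^n(O^{t'})$. This holds for every agent and every round $t' \le t$, so a TEFX allocation is TMMS.

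\textbf{Step 4 (existence).} Theorem~\ref{thm:tefxbinval} computes a TEFX allocation under strong binary valuations in finite time, so by Steps 1--3 the same allocation is TMMS.

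The only point needing care is Step 3's use of $\sum_j v_i(A^{t'}_j) = k$. This relies on the convention in the preliminaries that every allocation covers all of $O^{t'}$, which the algorithm does, since each arriving good is given to some agent. With that in place I expect no real obstacle; the argument is essentially the static fact that EFX implies MMS for binary additive valuations, applied round by round.
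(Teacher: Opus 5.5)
Your proof is correct and follows the paper's route. Round by round, EFX with $0/1$ values gives $v_i(A^{t'}_j) \le v_i(A^{t'}_i) + 1$, which forces each agent up to her maximin share, and existence then follows from Theorem~\ref{thm:tefxbinval}. Your Steps 1 and 3 (the bound $\mu_i^n(O^{t'}) \le \lfloor k/n \rfloor$ and the counting contradiction) spell out the step the paper only asserts with ``the utilities are tightly bounded within a single item's value.''
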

\begin{proof}
    Let $A^t$ be a TEFX allocation at any round $t$. Under binary valuations, the EFX condition implies that for any pair of agents $i$ and $j$, agent $i$'s valuation of $j$'s bundle satisfies $v_i(A^t_j) \le v_i(A^t_i) + b$. Because the utilities are tightly bounded within a single item's value across all agents, agent $i$ is guaranteed to receive at least her maximin share of the items allocated up to time $t$. Since this holds for every intermediate round, the cumulative allocation satisfies TMMS.
\end{proof}

\section{Identical Days} \label{sec:identicaldays}
The second environment we investigate is the Identical Days setting \citep{igarashi2024b}.

Intuitively, this setting dictates that the exact same set of items arrives at every discrete time step. While mathematically rigid, this structure directly models pervasive real-world allocation paradigms characterized by strict periodicity. For example, in cloud computing and network bandwidth distribution, a static pool of server resources or time slots is continuously provisioned and allocated in identical, recurring cycles \citep{baruah1995}.

\begin{definition}[Identical Days]
    A temporal fair division instance is under the identical days setting iff \(\forall t, t' \in [T]\), there is a bijection \(f_{t, t'}: O_t \Rightarrow O_{t'}\) s.t. \(\forall i \in [n], \forall g \in O_t, v_i(g) = v_i(f_{t, t'}(g))\).
\end{definition}

One might expect that the structural predictability of the Identical Days setting would fundamentally simplify temporal fair division, given that the algorithm faces a reliable, recurring sequence of items. Surprisingly, we show that this rigid periodicity is still insufficient to guarantee exact temporal fairness. Even in the highly restrictive domain where only two agents exist and they have completely identical valuations, the strict requirement to maintain envy-freeness for any good at every intermediate time step introduces unavoidable deadlocks. This reveals that structural symmetry in both time and preferences cannot fully overcome temporal frictions, leading to our first major impossibility result.
\begin{theorem}
\label{thm:tefxidenticaldays}
    A TEFX allocation is not guaranteed under the identical days setting when \(T > 2\), even if there are only two agents with identical valuations.
\end{theorem}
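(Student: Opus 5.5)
The plan is to exhibit one explicit instance: two agents with a common additive valuation $v$, three rounds, and each day consisting of a copy of the same small multiset $D$ of goods. I will then show by exhaustive case analysis that every sequence of extensions $A^1 \subseteq A^2 \subseteq A^3$ fails EFX in some round. Since the claim is only that TEFX is \emph{not guaranteed}, one instance suffices. Because the rounds are identical days, I will not use the single-item reduction lemma; each round delivers the whole multiset $D$ at once.

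\textbf{Step 1: choose $D$ so that round $1$ forces a unique split.} With identical valuations, a split $(X, Y)$ of $D$ is EFX iff the lighter bundle is worth at least the heavier bundle minus its least valuable good. I will take $D$ containing one large good $h$ and a few small goods. The values will be chosen so that the only EFX split of $O_1$ (up to swapping agents) is $\{h\}$ versus the rest. Every split that puts $h$ together with a small good then leaves a gap larger than that small good.

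\textbf{Step 2: enumerate the EFX extensions in rounds $2$ and $3$.} In round $2$, agent $1$ (holding $h$) receives some sub-multiset $S \subseteq D$, and agent $2$ receives $D \setminus S$. I list the subset sums of $D$ and test the EFX inequality for both directions. This leaves a short list of surviving round-$2$ states. For each survivor I repeat the check in round $3$, and the goal is to show that no $S'$ works. The EFX test is easy to evaluate because, with identical valuations, only the smallest good in the heavier bundle matters. I will organise the case split by which bundle is heavier and by the minimum good it contains.

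\textbf{Main obstacle.} The hardest part is choosing the values in $D$. Small natural candidates such as $\{1,2,3\}$ or $\{1,1,3\}$ still admit a TEFX path: after checking them, one can always rebalance in round $2$ or $3$ by giving the large good to the lighter agent. The instance therefore has to make the large good too big to be absorbed by the lighter bundle in round $2$ or $3$. At the same time, the small goods must be too small for fine balancing, because any heavier bundle that contains a small good is then forced to be nearly exactly balanced. I would first try $D = \{\varepsilon, 1, x\}$ with a tiny $\varepsilon$ and $x$ chosen strictly between integer thresholds, for example $x \in (1,2)$. This makes the parity and rounding of the cumulative totals incompatible with the near-equal splits that EFX demands once an $\varepsilon$-good sits in the heavier bundle. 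I would then tune $\varepsilon$ and $x$ by the enumeration of Step 2, if necessary with a small computer search over values, until every round-$3$ branch fails.

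The proof concludes by presenting the final instance together with a table of the surviving states after rounds $1$ and $2$ and the violating pair in each round-$3$ branch. Because the example uses exactly $3$ rounds, it establishes the statement for $T = 3$. For larger $T$, I would check that the same prefix still witnesses failure, since TEFX requires EFX at every $t' \le t$ and so a violation at $t' = 3$ rules out TEFX for any $T \ge 3$.
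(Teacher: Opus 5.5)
\textbf{Gap: no working instance, and the suggested one is not a counterexample.} Your overall strategy is the same as the paper's: one two-agent, identical-valuation, identical-days instance, checked exhaustively over three rounds. But you never fix the instance, and the family you propose admits a TEFX path. Take $D=\{\varepsilon,1,x\}$ with $1+3\varepsilon\le x<2$ (e.g.\ $\varepsilon=1/100$, $x=3/2$).
\begin{itemize}
\item Round 1 does force $\{\varepsilon,1\}$ versus $\{x\}$, as you intend.
\item Round 2 has a second EFX extension besides the balanced one. Give the new $\varepsilon$ and $1$ to agent 1 and the new $x$ to agent 2. The bundles are worth $2+2\varepsilon$ and $2x$. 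The heavier bundle $\{x,x\}$ has least good $x$, and $2x-x=x\le 2+2\varepsilon$, so this is EFX.
\item In round 3, give $\{\varepsilon,x\}$ to agent 1 and $\{1\}$ to agent 2. The values are $2+3\varepsilon+x$ and $2x+1$. The heavier bundle $\{x,x,1\}$ has least good $1$, and $2x\le 2+3\varepsilon+x$, so this is again EFX.
\end{itemize}
So your Step 2 fails for this choice, and the \emph{parity/rounding} heuristic is not what drives the impossibility. The whole content of this theorem is a working instance plus its case check, so deferring that choice to tuning or a computer search leaves the proof unfinished.

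\textbf{What is missing.} You need the opposite regime: the large good must exceed the total value of the small goods of two days, $x>2(1+\varepsilon)$. The paper uses $D=\{0,1,3\}$.
\begin{itemize}
\item With $x>2(1+\varepsilon)$, the round-2 state above is no longer EFX, since $x>2+2\varepsilon$.
\item One checks that the only EFX extension in round 2 gives the new large good to the agent holding the small goods, and the new small goods to the other agent.
\item This leaves two bundles of equal value $1+\varepsilon+x$, each containing a good worth at most $\varepsilon$.
\item EFX here allows removing any good, so round 3 would need a split of $D$ whose two parts differ by at most $\varepsilon$.
\item The most balanced split, $\{x\}$ versus $\{\varepsilon,1\}$, differs by $x-1-\varepsilon>\varepsilon$, so no round-3 allocation is EFX.
\end{itemize}
With that choice your plan goes through and essentially reproduces the paper's proof. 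Using a positive $\varepsilon$ instead of the paper's literal $0$ even makes the round-2 state unique. With $0$, the new zero-valued good may go to either agent, which is why the paper treats zero-valued goods with a separate proposition.
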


This impossibility result naturally shifts our focus to the approximation frontiers of temporal envy-freeness in the identical-days setting. As a foundational baseline for this domain, we establish that a $1/2$-TEFX allocation always exists and can be efficiently computed for two agents. While this approximation ratio leaves room for future optimization, it serves as a pioneering tractability result that provides crucial structural insights for approximating temporal fairness.

Our algorithmic approach adapts the classical Envy-Cycle Elimination framework \citep{lipton2004}. Specifically, we leverage a modified selection rule: rather than assigning an arbitrary unallocated good to an agent who is not envied, our mechanism greedily allocates the most valuable available good to that agent. This variant is known to yield a $1/2$-EFX guarantee in static, two-agent environments \citep{chan2019}. By exploiting the temporal symmetry inherent in the identical-days setting, together with this two-agent structural alignment, we prove that this greedy cycle-elimination invariant can be preserved across rounds, yielding a polynomial-time temporal guarantee.

\begin{theorem}
    \label{thm:atefxidenticaldays}
    A \(\frac{1}{2}\)-TEFX allocation can be computed in polynomial time when there are two agents under the identical days setting.
\end{theorem}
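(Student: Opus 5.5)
The plan is to reduce the temporal problem to a single static allocation of one day's goods. We then repeat that allocation day after day, swapping the two bundles between the agents on alternate days. The single-item reduction lemma is not used here, since splitting days into single-good rounds would destroy the identical-days structure. Instead we work directly with the common day-set $D\cong O_t$, using the bijections $f_{t,t'}$ to identify each day's goods with $D$.

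\textbf{Step 1 (static base allocation).} Run the modified envy-cycle elimination on $D$ for the two agents. This means repeatedly giving the most valuable remaining good (in the receiver's own valuation) to an agent who is not envied, and swapping the two bundles whenever a 2-cycle of envy appears. It produces a partition $(B_1,B_2)$ of $D$ that is $\frac12$-EFX \citep{chan2019}:
\[
v_i(B_i)\ge \tfrac12\, v_i(B_{3-i}\setminus\{g\})\quad\text{for all } g\in B_{3-i},\ i\in\{1,2\}.
\]
This takes polynomial time in $|D|$.

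\textbf{Step 2 (temporal schedule).} On odd days $t$, give the copy $f_{1,t}(B_1)$ to agent $1$ and $f_{1,t}(B_2)$ to agent $2$. On even days, give the copies the other way round. Write $V_i=v_i(D)=v_i(B_1)+v_i(B_2)$.
\begin{itemize}
\item After $t=2k$ days, each agent holds $k$ copies of $B_1$ and $k$ copies of $B_2$, so $v_i(A^t_1)=v_i(A^t_2)=kV_i$. The allocation is then envy-free, which in particular makes it $\frac12$-EFX.
\item After $t=2k+1$ days, agent $i$ holds $kV_i+v_i(B_i)$ in her own valuation, while the other bundle is worth $kV_i+v_i(B_{3-i})$ to her.
\end{itemize}

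\textbf{Step 3 (odd days).} We must verify, for every good $g$ in the other agent's bundle, that
\[
kV_i+v_i(B_i)\ \ge\ \tfrac12\bigl(kV_i+v_i(B_{3-i})-v_i(g)\bigr).
\]
\begin{itemize}
\item For $k=0$ this is exactly the static guarantee from Step 1.
\item For $k\ge1$ we use the cruder bound obtained by dropping $v_i(g)\ge0$. The inequality then follows from $\tfrac12 kV_i\ge \tfrac12 v_i(B_{3-i})$ together with $v_i(B_i)\ge 0$ and $kV_i\ge0$, since $kV_i\ge V_i\ge v_i(B_{3-i})$.
\end{itemize}

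The main obstacle I anticipate is the concern that on odd days the removed good $g$ may come from an earlier day. Such a good could be much less valuable than anything in the current copy of $B_{3-i}$, so the static $\frac12$-EFX certificate would not transfer directly. The resolution is Step 3's observation that once at least one full pair of days has elapsed, the accumulated common value $kV_i$ alone already absorbs the imbalance $v_i(B_{3-i})$. This is the reason for alternating the orientation rather than reusing the same orientation every day. With repetition in a fixed orientation, the gap $k\,(v_i(B_{3-i})-v_i(B_i))$ could grow linearly in $k$ and eventually break the factor $\frac12$.

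Overall the running time is polynomial: one static computation on $D$, plus $T$ rounds of copying via the bijections.
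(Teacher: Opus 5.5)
Your proof is correct, and it takes a different route from the paper.

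\textbf{The paper's route.} The paper runs an alternating cut-and-choose on every day. On odd days agent $1$ partitions $O_t$ by running MAX-ECE with $v_1$ for both agents, and agent $2$ takes the piece she prefers; on even days the roles are reversed. The analysis observes that all odd days (and all even days) produce identical allocations. It then sums the per-day inequalities: an agent is envy-free on the days she chooses and EFX on the days she cuts. Finally it bounds the least valuable removed good by half of the corresponding bundle value to obtain the factor $\frac12$.

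\textbf{Your route.} You fix one $\frac12$-EFX partition $(B_1,B_2)$ once and alternate only its orientation, which gives exact cancellation. After $2k$ days both cumulative bundles are worth $kV_i$ to each agent. After $2k+1\ge 3$ days the common mass $kV_i\ge V_i$ absorbs the one-day imbalance. In fact you get $v_i(A^t_i)\ge kV_i\ge \frac{k}{k+1}\,v_i(A^t_{3-i})$ without removing any good.

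\textbf{What each buys.} Your argument is more elementary and more robust. It avoids the paper's step bounding the removed good by half a bundle, which needs at least two goods in that bundle and is handled loosely there. It also shows the static guarantee is used only at $t=1$, so any static $\frac12$-EFX partition suffices, or an exact EFX one from cut-and-choose. And it shows why the ratio only improves over time. What the paper's scheme keeps in exchange is per-day fairness: each individual day's allocation is EFX, with the chooser envy-free and the cutter EFX on that day's goods. On your even days the swapped partition $(B_2,B_1)$ can hand an agent the day's piece she values far less, so a single day need not be fair on its own. The theorem does not require per-day fairness, so this does not affect your proof.
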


The temporal regularity of the identical days setting is similarly insufficient to overcome the structural hurdles of maximin fairness. Despite the predictable recurrence of items, the requirement to maintain the maximin share threshold at each discrete time step remains overly restrictive. We find that the adversarial potential of even a simple, repeating sequence of goods can force a violation of the TMMS requirement. This impossibility holds even under complete preference symmetry, as formalized below.

\begin{theorem}
    \label{thm:tmmsidenticaldays}
    A TMMS allocation for goods under the identical days setting may not exist, even when there are two agents with identical valuations.
\end{theorem}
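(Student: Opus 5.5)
We exhibit an explicit instance with two agents, identical valuations and $T=3$, and show that every allocation that is MMS in rounds $1$ and $2$ leaves no MMS extension in round $3$. Since valuations are identical, we write $v$ for the common valuation. Each day the same three goods arrive, with values $(3,3,2)$, so $O_t=\{a_t,b_t,c_t\}$ with $v(a_t)=v(b_t)=3$ and $v(c_t)=2$. The identical days condition holds trivially via the bijection $a_t\mapsto a_{t'}$, $b_t\mapsto b_{t'}$, $c_t\mapsto c_{t'}$. The argument is a case analysis on the cumulative bundle values, which are all that matter for MMS under identical valuations.

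\emph{Round $1$.} We have $v(O^1)=8$. The partition $\{3\},\{3,2\}$ gives $\mu^2(O^1)=3$, and no partition does better: any bundle worth at least $4$ contains a $3$ together with another good, so the complementary bundle is worth at most $3$. We then list all bipartitions of $O_1$. The split $\{3,3\}\mid\{2\}$ gives one agent only $2$, and an empty bundle fails as well. Hence, up to renaming the agents, the round-$1$ allocation is forced to give $v(A^1_1)=3$ and $v(A^1_2)=5$.

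\emph{Round $2$.} Now $v(O^2)=16$, and $\mu^2(O^2)=8$ via $\{3,3,2\}\mid\{3,3,2\}$. The new goods $O_2$ must be split so that agent $1$ gains at least $5$ and agent $2$ gains at least $3$, with total gain $8$. The only possibility is for agent $1$ to gain exactly $\{3,2\}$ and agent $2$ exactly $\{3\}$; in particular the split $\{3,3\}\mid\{2\}$ would leave agent $2$ at $7<8$. So the round-$2$ cumulative values are forced to be $(8,8)$.

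\emph{Round $3$.} The multiset $O^3$ consists of six $3$'s and three $2$'s, with total $24$. The partition $\{3,3,3,3\}\mid\{3,3,2,2,2\}$ shows $\mu^2(O^3)=12$. Reaching TMMS at round $3$ would therefore require splitting $O_3=\{3,3,2\}$ so that each agent gains at least $4$. The possible splits give minimum gains $0$, $2$ or $3$, so this is impossible. Hence no allocation sequence is MMS at all of rounds $1,2,3$, and the theorem follows. Any instance with $T>3$ contains this prefix, so the conclusion extends to larger $T$ as well.

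The main obstacle was locating an instance, because two rounds are never enough. With identical valuations, after any round-$1$ split $(X, D\setminus X)$ one can always assign the complementary split in round $2$, giving both agents exactly $v(D)\ge \mu^2(O^2)$. Hence the violation must come from a third round, where we need $\mu^2(3D) > v(D)+\mu^2(D)$. The instance $D=(3,3,2)$ satisfies this with $12>8+3$. The remaining care is in verifying that the earlier rounds are forced, which is the exhaustive case check on the three-good splits done above.
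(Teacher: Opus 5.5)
Your proof is correct and follows the same strategy as the paper. The paper also uses a three-day instance with two identical agents, with day values $(1,3,10)$ instead of your $(3,3,2)$: the MMS constraints at rounds $1$ and $2$ force both cumulative values to equal $v(D)$, and then no split of the third day lets both agents reach $\mu^2(O^3)$. Your final step is a little cleaner, since it is purely about values (no subset of $\{3,3,2\}$ has value $4$), whereas the paper argues about which specific items must be bundled together. You could also drop your round-$1$ analysis: any MMS allocation at round $2$ already has both agents at exactly $v(D)$, because the total is $2v(D)$ and $\mu^2(O^2)=v(D)$.
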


\section{Identical Valuation} \label{sec:identicalval}
The next constraint we explore is the Identical Valuation setting \citep{plaut2017}. In this scenario, every agent assigns the exact same utility to any given item. This strict symmetry accurately models the distribution of universal commodities, standard-issue resources, or fungible assets, such as monetary grants, where subjective differences in preference are non-existent, and allocations are purely driven by the objective value of the goods.
\begin{definition}[Identical Valuation]
    A temporal fair division instance is under the identical valuation setting iff \(\forall t \in [T], \forall g \in O_t, \forall i, j \in [n], v_i(g) = v_j(g)\).
\end{definition}

Recall from Lemma~\ref{thm:atefxconst} that even when all agents share perfectly identical valuations, a universal, constant-factor approximation for $\alpha$-TEFX remains structurally impossible. However, the strict uniformity of preferences in this setting allows us to significantly tighten the value-dependent bound relative to that derived for the general environment. The core intuition behind this stronger result lies in a greedy equalization strategy coupled with the meticulous handling of zero-valued goods. By systematically assigning each arriving item to the agent currently holding the least valuable bundle, and carefully circumventing the algorithmic vulnerabilities introduced by items with zero marginal utility, we can successfully secure the following improved approximation guarantee.
\begin{theorem}
\label{thm:atefxidenticalvaluation}
    A \(\frac{\min_{g \in O, v(g) > 0} v(g)}{\max_{g \in O} v(g)}\)-TEFX allocation can be computed in polynomial time under the identical valuation setting, and it is tight.
\end{theorem}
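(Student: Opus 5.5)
The plan is to analyze the greedy equalization rule. When each good $g$ arrives in round $t$ (by the single-item reduction stated earlier, we may assume $|O_t|=1$), we give it to an agent whose current bundle has minimum common value $v(A^{t-1}_i)$, breaking ties deterministically. Zero-valued goods get a special rule, described below. Write $v_{\min}=\min_{g\in O, v(g)>0} v(g)$, $v_{\max}=\max_{g\in O} v(g)$ and $\alpha=v_{\min}/v_{\max}$. Each round costs $O(n)$ time, so the procedure is clearly polynomial.

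\textbf{Main invariant.} For all $i,j$ and every $t$,
\[
v(A^t_j)-v_{\max}\le v(A^t_i)\quad\text{whenever } A^t_j\neq\emptyset .
\]
I will prove it by induction on $t$. Let $h$ be the last good added to $A_j$. When $h$ was added, $j$ was a minimum agent, so $v(A_j\setminus\{h\})\le v(A_i)$ at that moment. Both sides can only change afterwards by $i$ receiving goods, which only increases $v(A_i)$; $A_j$ itself is unchanged until $j$ next receives a good. This gives the stronger statement $v(A_j\setminus\{h\})\le v(A_i)$.

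\textbf{Removing a positive good.} For $g\in A_j$ with $v(g)>0$, I split into two cases. If $v(A_j\setminus\{g\})=0$ the $\alpha$-EFX inequality is trivial. Otherwise I use $v(A_j\setminus\{g\})\le v(A_j)-v_{\min}$. When $v(A_i)\ge v_{\min}$, I aim to show $\alpha\,(v(A_j)-v_{\min})\le v(A_i)$ by combining the invariant with the fact that every positive good lies between $v_{\min}$ and $v_{\max}$: the bundle $A_j$ minus its last good is worth at most $v(A_i)$, and the last good is worth at most $v_{\max}$. If $v(A_i)=0$, then $i$ was minimal with value $0$ throughout, so $j$ received at most one positive good in total, and $A_j\setminus\{g\}$ has value $0$.

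\textbf{Zero-valued goods (main obstacle).} Because EFX removes \emph{any} good, if $A_j$ contains a zero-valued good $z$, then removing $z$ demands $v(A_i)\ge \alpha\, v(A_j)$ for all $i$, with nothing subtracted. I will assign a zero-valued good to an agent of currently minimum value, since such an agent is envied by no one at that moment. I will also impose tie-breaking that keeps positive goods away from agents holding zero goods whenever another minimum agent exists. The delicate case is when such an agent later receives a positive good $h$. I then need $v(A_j)=v(A_j\setminus\{h\})+v(h)\le v(A_i)/\alpha$. My first attempt is to argue that when this happens, $j$ is the unique minimum, or tied only with agents of positive value at least $v_{\min}$. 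Then
\[
v(A_j)\le v(A_i)+v_{\max}\le v(A_i)\Bigl(1+\tfrac{v_{\max}}{v_{\min}}\Bigr).
\]
This bound is only as strong as the target ratio if it is refined. If the refinement fails, the fallback is to concentrate all zero goods on a single designated agent, chosen as the minimum-value agent at the time, and to adjust the analysis accordingly.

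\textbf{Tightness.} I generalize the instance in Lemma~\ref{thm:atefxconst}: two agents, goods of values $1,1,x$ arriving in that order, with $x\ge 1$. After round $2$, EFX forces the two unit goods to be split between the agents. Whoever then receives the good of value $x$ holds a bundle that, after removing a unit good, is worth $x$ to the other agent, who holds $1$. So no ratio better than $1/x=\min v/\max v$ is achievable, which matches the upper bound.
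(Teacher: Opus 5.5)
Your analysis of removing a positive-valued good is correct. Your tightness instance is also correct; it is the paper's instance from Lemma~\ref{thm:atefxconst}.

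\textbf{The gap is the zero-valued case.} You correctly flag it as the crux but leave it open. Your bound $v(A_j)\le v(A_i)(1+v_{\max}/v_{\min})$ only certifies the ratio $v_{\min}/(v_{\min}+v_{\max})$. No sharper analysis can close this, because the greedy minimum-value rule itself violates the target ratio under either of your zero-good policies. Take $n=2$ and single goods of values $0,\,1.05,\,1,\,10$ arriving in that order, so $\alpha=1/10$.
\begin{enumerate}
\item The zero good goes to one agent, say agent~2.
\item The good of value $1.05$ is tie-broken away from agent~2, so it goes to agent~1.
\item Agent~2 is then the unique minimum for the good of value $1$ (since $0<1.05$).
\item Agent~2 is again the unique minimum for the good of value $10$ (since $1<1.05$).
\end{enumerate}
Removing the zero good from $A_2$ leaves value $11$, and $v(A_1)=1.05<\frac{1}{10}\cdot 11$. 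With a single zero good, your first rule and your designated-agent fallback coincide, so both fail.

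\textbf{What is missing.} This instance is not a counterexample to the theorem. Sending the value-$10$ good to agent~1 instead yields a $\frac{1}{10}$-TEFX allocation at every round. So you need a rule that sometimes departs from the minimum-value choice, withholding large goods from a zero-holding agent. You also need a new invariant for positive-good removals. Your argument that ``the last good was added to a minimum agent'' no longer holds after such a departure.

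\textbf{The paper does not supply this step either.} Its algorithm is essentially your fallback: all zero goods go to agent $n$, and positive goods go to the lowest-index minimum agent. Its analysis rests on the same pairwise gap $|v(A_i)-v(A_j)|\le v_{\max}$. Its derivation of the ratio from $d\le v_{\max}$ breaks exactly when the removed good is a zero good in agent $n$'s bundle. The instance above defeats the paper's algorithm in the same way.
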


This tight approximation result naturally prompts a fundamental question: when relational, envy-based fairness metrics must be relaxed, can we instead guarantee an absolute, share-based threshold? To answer this, we evaluate our newly introduced benchmark, the Temporal Maximin Share (TMMS). 

One might intuitively conjecture that this structural constraint would easily yield a positive existence result for TMMS in temporal settings. Surprisingly, the next result reveals a stark, counterintuitive divergence: the relentless requirement to maintain fairness at every intermediate step introduces temporal friction so severe that even perfect preference alignment cannot guarantee absolute guarantees.

\begin{theorem}
    \label{thm:tmmsidenticalvaluation}
    A TMMS allocation for goods may not exist under the identical valuation settings.
\end{theorem}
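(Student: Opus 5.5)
The plan is to give an explicit counterexample. It uses two agents with identical valuations and one good per round, which is the most restrictive arrival pattern. The idea is to make the maximin share at an intermediate round force a ``balanced'' split, and then have a later good raise the maximin share so that only a different split could meet it. Since the past allocation is fixed, that later split is unreachable. A single-item-per-round instance is also a valid instance of the general model, so it suffices for the ``may not exist'' statement.

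Concretely, take $N=\{1,2\}$ and $T=3$, with $O_t=\{g_t\}$. Set $v(g_1)=v(g_2)=1$ and $v(g_3)=2$, identical for both agents. I will check the maximin share $\mu^2(O^t)$ at each prefix.

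\emph{Round 1.} $O^1=\{g_1\}$, so any $2$-partition leaves one bundle empty and $\mu^2(O^1)=0$. Every allocation is MMS.

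\emph{Round 2.} $O^2=\{g_1,g_2\}$ and $\mu^2(O^2)=1$, attained by $\{g_1\},\{g_2\}$. For both agents to reach value $1$, each must hold exactly one of $g_1,g_2$. So any TMMS allocation must satisfy, without loss of generality, $A^2_1=\{g_1\}$ and $A^2_2=\{g_2\}$.

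\emph{Round 3.} $O^3=\{g_1,g_2,g_3\}$ has total value $4$, and the partition $\{g_1,g_2\},\{g_3\}$ gives $\mu^2(O^3)=2$. Past goods cannot be reallocated, so $g_3$ goes to one of the agents. That agent gets value $3$ and the other keeps value $1<2$, which violates MMS.

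Hence no allocation sequence is MMS at every round, and no TMMS allocation exists for this instance. The only point needing care is that at round 2 the MMS constraint forces the split completely; this holds because the MMS value equals half the total value, so both bundles must have value exactly $1$. I would also remark that the instance has only two distinct positive values ($1$ and $2$). This is why the same example should later yield Corollary~\ref{cor:tmmsbival} for bi-valued goods.
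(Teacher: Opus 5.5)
Your proof is correct and is essentially the paper's: the same two-agent identical-valuation instance with values $1,1,2$, where the intermediate maximin share forces $g_1$ and $g_2$ to different agents, after which $g_3$ raises the maximin share to $2$, a level only the split $\{g_1,g_2\},\{g_3\}$ could meet. The only difference is that the paper delivers $g_1,g_2$ together in round $1$ and $g_3$ in round $T$. That version is marginally stronger, since it shows impossibility even with batched arrivals and implies your one-good-per-round instance, while your extra round with $\mu=0$ is harmless.
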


\section{Bi-valued goods} \label{sec:bival}
Lastly, we examine the bi-valued goods setting \citep{amanatidis2021}. In this restricted domain, every agent assigns one of two strictly positive constants to each arriving good. We explicitly exclude zero-valued goods from this model; permitting a utility of zero would cause the framework to degenerate into the generalized binary valuation setting analyzed previously.
\begin{definition}[Bi-valued Valuation]
    A temporal fair division instance is under the bi-valued valuation setting iff \(\forall t \in [T], \forall g \in O_t, \forall i \in [n], v_i(g) \in \{b_1, b_2\}\) where \(b_1, b_2 \in \mathbb{R}^+\).
\end{definition}

By strictly bounding utilities to these two positive tiers, which we denote as $a$ and $b$, where $b \geq a > 0$, we establish a highly elegant approximation result. We demonstrate that our proposed algorithmic mechanism yields an approximation ratio exactly equal to the proportional gap between these two valuation constants. Furthermore, we prove that this bound is not merely a worst-case guarantee, but a mathematically tight limit for this setting.

Intuitively, we observe that in the bi-valued goods setting, the ratio of goods does not exceed \(b/a\). Therefore, if two agents have bundles with the same size, the ratio of the bundle's values also does not exceed \(b/a\).
\begin{theorem}
\label{thm:atefxbivalued}
For bi-valued goods valued \(b \geq a\), a \(\frac{a}{b}\)-TEFX allocation can be computed in polynomial time, and it is tight.
\end{theorem}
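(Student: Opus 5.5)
The plan is to use the simplest possible allocation rule, a round-robin on bundle sizes, and exploit the hint stated just before the theorem: when two bundles have the same cardinality, their values to any fixed agent differ by at most a factor of $b/a$. Under the one-good-per-round reduction (the lemma from \cite{elkind2024}), the rule gives the good arriving in round $t$ to agent $((t-1) \bmod n) + 1$. This runs in polynomial time. After every round $t$, the bundle sizes satisfy $|A^t_i| \in \{\lfloor t/n \rfloor, \lceil t/n \rceil\}$, and every agent owning a larger bundle has a smaller index than every agent owning a smaller bundle.

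To verify $\frac{a}{b}$-EFX at each round $t$, fix agents $i,j$ and any $g \in A^t_j$. By construction $|A^t_j| \le |A^t_i| + 1$, so $|A^t_j \setminus \{g\}| \le |A^t_i|$. Since every good is worth between $a$ and $b$ to agent $i$, we get
\[
v_i(A^t_j \setminus \{g\}) \le b\,|A^t_j \setminus \{g\}| \le b\,|A^t_i| \le \frac{b}{a}\, v_i(A^t_i).
\]
Rearranging gives $v_i(A^t_i) \ge \frac{a}{b} v_i(A^t_j \setminus \{g\})$. This holds for every prefix $t$, so the allocation is $\frac{a}{b}$-TEFX. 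One edge case needs a check: if $A^t_i = \emptyset$, then $|A^t_j| \le 1$ and the removed bundle is empty, so the inequality is trivially satisfied.

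For tightness, I adapt the instance in Lemma~\ref{thm:atefxconst} to two valuation levels. Take two agents, both valuing goods in $\{a,b\}$.
\begin{enumerate}
\item Goods $g_1$ and $g_2$, both valued $a$ by both agents, arrive in rounds 1 and 2.
\item A good $g_3$ valued $b$ by both agents arrives in round 3.
\end{enumerate}
Any allocation that is $\alpha$-EFX after round 2 with $\alpha>0$ must split $g_1,g_2$ between the agents, since otherwise one agent holds $\{g_1,g_2\}$ and the other, holding nothing, violates EFX. In round 3, whichever agent receives $g_3$, the other agent compares $a$ against $v(\{g_1,g_3\}\setminus\{g_1\}) = b$. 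So no algorithm achieves $\alpha$-TEFX for any $\alpha > a/b$.

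The main obstacle is choosing the tightness instance so that it is valid for every algorithm and respects the arrival order. The instance above is forced at round 2 and requires only three goods, so I expect it to suffice. The reasoning matches that of Lemma~\ref{thm:atefxconst}, with $1/k+1$ replaced by $b/a$.
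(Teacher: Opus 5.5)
Your proposal is correct and is essentially the paper's proof. It uses the same three ingredients: a round-robin allocation that keeps bundle sizes within one of each other; the fact that every good is worth between $a$ and $b$ to every agent, so equal cardinalities give a value ratio of at most $b/a$; and the same three-good tightness instance (you let the two $a$-valued goods arrive in separate rounds, whereas the paper puts them together in $O_1$, and both force the split before $g_3$ arrives). Your direct chain $v_i(A^t_j \setminus \{g\}) \le b\,|A^t_i| \le \frac{b}{a}\, v_i(A^t_i)$ is a slightly cleaner substitute for the paper's matching argument, and it also covers the case $|A^t_i| = |A^t_j| + 1$, which the paper's case split leaves implicit.
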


As considered in the previous section, the instance used in Theorem~\ref{thm:tmmsidenticalvaluation} is also bi-valued. Therefore, the following corollary is obtained.
\begin{corollary} \label{cor:tmmsbival}
    A TMMS allocation may not exist under the bi-valued setting.
\end{corollary}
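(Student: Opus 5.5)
The plan is a direct reduction to Theorem~\ref{thm:tmmsidenticalvaluation}. A non-existence result carries over to any setting that contains the bad instance. So it suffices to show that the counterexample witnessing Theorem~\ref{thm:tmmsidenticalvaluation} is itself an admissible bi-valued instance. That means checking two things: every good receives one of only two values, and both values are strictly positive, so that the instance is not a degenerate binary one. Identical valuations are trivially compatible with the bi-valued definition, since the definition only restricts the set of values and not whether agents agree.

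The one point needing care is whether the identical-valuation counterexample uses zero-valued goods or more than two distinct values. If it does, I would replace it with an explicit instance that is simultaneously identical and bi-valued. It uses two agents with common valuation $v$ and values in $\{a,b\}=\{1,2\}$, with one good per round:
\begin{itemize}
\item Rounds $1$ and $2$: goods $g_1,g_2$ arrive with $v(g_1)=v(g_2)=1$.
\item Round $3$: good $g_3$ arrives with $v(g_3)=2$.
\end{itemize}

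The argument then has two steps.

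\textbf{Step 1 (forced split at round $2$).} At $t=2$ we have $\mu^2(O^2)=1$, attained by the partition $\{g_1\},\{g_2\}$. Hence TMMS forces each agent to hold exactly one of $g_1,g_2$. Round $1$ imposes nothing, since $\mu^2(\{g_1\})=0$.

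\textbf{Step 2 (violation at round $3$).} At $t=3$ the partition $\{g_3\},\{g_1,g_2\}$ gives $\mu^2(O^3)=2$. Whichever agent receives $g_3$, the other agent keeps value $1<2$. This violates MMS at round $3$.

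No allocation is therefore TMMS, and the instance is bi-valued with $b\ge a>0$. This establishes Corollary~\ref{cor:tmmsbival}.

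No step is a serious obstacle; the only thing to get right is the MMS computation at each round. I would double-check two points. First, Step 1 leaves no freedom, because both goods must already be split at round $2$. Second, at round $3$ the best partition really has value $2$ and not more, since the total value is $4$. With those verified, the corollary follows immediately.
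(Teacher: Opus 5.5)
Your proposal is correct and is essentially the paper's argument. The paper's entire proof is the observation that the counterexample for Theorem~\ref{thm:tmmsidenticalvaluation} (two agents, identical values $v(g_1)=v(g_2)=1$, $v(g_3)=2$, with $g_1,g_2$ arriving together and $g_3$ later) already uses only the two positive values $1$ and $2$, so your fallback instance is not needed. That fallback is the same counterexample with $g_1,g_2$ placed in separate rounds, and its MMS values ($\mu=1$ at round $2$, $\mu=2$ at round $3$) are correct, so it serves as a cleaner self-contained check than the paper's terse proof.
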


\section {Discussion and Future Works}
In this work, we explored the boundaries of temporal fair division with structured constraints. Our primary technical contribution is a novel dynamic backtracking framework that utilizes a lexicographical potential function to resolve temporal envy cycles and guarantee exact TEFX under strong binary valuations, proving that the irreversible deadlocks of forward-looking greedy algorithms can be resolved via bounded historical reallocations. Furthermore, we expand the field's conceptual landscape by being the first to define and explicitly consider TMMS. Our structural analysis reveals a fascinating divergence between these fairness concepts: while strong binary valuations yield an elegant alignment in which TEFX naturally guarantees TMMS, identical valuations introduce severe temporal frictions that render exact TMMS impossible and limit relational fairness to our tight approximation ratio for $\alpha$-TEFX.

Several open questions remain regarding the extension of the temporal fair division framework. One problem is the approximation ratio for TEFX under identical days. The result we have given in this paper is weak. Another key question is the study of temporal fair division with varying strategic behavior among agents. Investigating the compatibility of temporal fairness with strategy-proofness is crucial to ensure that agents cannot manipulate the algorithm through misreporting, which is essential for deploying these mechanisms in competitive environments, such as cloud computing markets.

\bibliographystyle{splncs04}
\bibliography{bib}

\clearpage
\appendix
\section{Appendix}
\subsection{Proof of Theorem~\ref{thm:tefxidenticaldays}}
    Consider the instance with two agents and three goods that arrive in each round \(t\), \(O_t = \{g_1, g_2, g_3\}\), where agents have identical valuations: \(v(g_1) = 0, v(g_2) = 1\), and \(v(g_3) = 3\).
    \begin{lemma}
    \label{lemma:tefxidenticaldaysd1}
    On \(t = 1\), one agent must obtain the good valued \(0\) and \(1\), and the other agent must obtain the good valued \(3\) to achieve TEFX.
    \end{lemma}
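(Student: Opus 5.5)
Since the instance is small, I would prove Lemma~\ref{lemma:tefxidenticaldaysd1} by exhausting the possible splits of $O_1=\{g_1,g_2,g_3\}$ between the two agents. Because valuations are identical (and the agents are symmetric), an allocation is EFX iff, for the bundles $X,Y$, we have $v(X)\ge v(Y\setminus\{g\})$ for every $g\in Y$ and vice versa. Here the removed good is arbitrary, including the zero-valued good $g_1$. So the bundle holding $g_1$ cannot contain any other positively valued good while the other agent is strictly worse off.

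I would first rule out the splits in which one agent receives everything or two goods, and then check the remaining one-versus-two splits.
\begin{itemize}
\item If one agent gets all three goods (or the other is empty), removing $g_1$ leaves value $4>0$. This violates EFX.
\item If the splits are $\{g_3\}$ versus $\{g_1,g_2\}$, the agent with $\{g_1,g_2\}$ values her own bundle at $1$ and the other bundle minus any good at $0$. The agent with $g_3$ values her own bundle at $3$ and the other bundle minus any good at most $1$. So this split is EFX.
\item If the split is $\{g_2\}$ versus $\{g_1,g_3\}$, the holder of $g_2$ has value $1$, but removing $g_1$ from the other bundle leaves $3>1$. This violates EFX.
\item If the split is $\{g_1\}$ versus $\{g_2,g_3\}$, the holder of $g_1$ has value $0$, while removing $g_2$ leaves $3$. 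This violates EFX.
\end{itemize}
Hence the unique EFX allocation at $t=1$, up to renaming agents, gives $\{g_1,g_2\}$ to one agent and $\{g_3\}$ to the other. Since TEFX requires $A^1$ to be EFX, this proves the lemma.

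There is one step that needs care. Under the paper's convention of the preliminaries lemma, which reduces to one good per round, the identical-days instance keeps all three goods in round $1$ (this is the exception the paper flags as "implied by the setting"). So I would state explicitly that $A^1$ is an allocation of all of $O_1$ and that EFX must hold there.

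The only real obstacle is remembering that the EFX definition quantifies over every $g\in A_j$, including the zero-valued $g_1$. This is exactly what kills the splits in which $g_1$ is bundled with $g_3$ against $g_2$ alone. Beyond that, the argument is a routine finite check, which I would present compactly as the case analysis above.
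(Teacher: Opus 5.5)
Your proposal is correct and follows the same route as the paper. Both are an exhaustive check of the four splits of $O_1$ up to swapping agents. You rule out all-versus-nothing, $\{g_1\}$ versus $\{g_2,g_3\}$, and $\{g_2\}$ versus $\{g_1,g_3\}$ (the last by removing the zero-valued $g_1$), exactly as in the paper's Cases 1--3; you also verify that $\{g_1,g_2\}$ versus $\{g_3\}$ is EFX, which the paper only records in its table.
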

    \begin{proof}
        Without loss of generality, assume the allocation of goods does not make agent \(2\) envy agent \(1\). Then, there are only three other cases.
        
        \textbf{Case 1:} Assume agent \(1\) does not get any good, and agent \(2\) gets the good valued \(0\), \(1\), and \(3\). Agent \(1\) envies agent \(2\), and the removal of any arbitrary good from agent \(2\), for example, the good valued \(0\), does not eliminate the envy of agent \(1\) towards agent \(2\). Therefore, we do not achieve TEFX in this case.
        
        \textbf{Case 2:} Assume agent \(1\) gets the good valued \(0\), and agent \(2\) gets the good valued \(1\) and \(3\). This case is similar to \textbf{Case 1}, where removing any good from agent \(2\) does not eliminate the envy from agent \(1\) towards agent \(2\). Therefore, we do not achieve TEFX in this case.
        
        \textbf{Case 3:} Assume agent \(1\) gets the good valued \(1\), and agent \(2\) gets the good valued \(0\) and \(3\). Agent \(1\) envies agent \(2\), and the removal of the good valued \(0\) from agent \(2\) does not eliminate the envy of agent \(1\) towards agent \(2\). Therefore, we do not achieve TEFX in this case.
        
        This finishes the proof.
\end{proof}
    \begin{table*}[h]
      \centering
      \caption{Lemma~\ref{lemma:tefxidenticaldaysd1} Different allocation illustration in round \(1\)}
      \label{tab:tefxidenticaldays1}
        \begin{tabular}{|c|c|c|}
          \hline
           & Agent 1 & Agent 2 \\
          \hline
          Successful & \(g_1\), \(g_2\) (\(v(A^1_1) = 1\)) & \(g_3\) (\(v(A^1_2) = 3\)) \\
          \hline
          Case 1 & \(\emptyset\) (\(v(A^1_1) = 0\)) & \(g_1\), \(g_2\), \(g_3\) (\(v(A^1_2) = 4\)) \\
          \hline
          Case 2 & \(g_1\) (\(v(A^1_1) = 0\)) & \(g_2\), \(g_3\) (\(v(A^1_2) = 4\)) \\
          \hline
          Case 3 & \(g_2\) (\(v(A^1_1) = 1\)) & \(g_1\), \(g_3\) (\(v(A^1_2) = 3\)) \\
          \hline
        \end{tabular}
    \end{table*}
    
    Without loss of generality, assume agent \(1\) gets the good valued \(0\) and \(1\), and agent \(2\) gets the good valued \(3\).
    \begin{lemma}
        \label{lemma:tefxidenticaldaysd2}
        On \(t = 2\), the good valued \(1\) must be allocated to agent \(2\), and the good valued \(3\) must be allocated to agent \(1\).
    \end{lemma}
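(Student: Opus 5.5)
The plan is a short case analysis on where the two positively valued goods of round $2$ go. Throughout, take the round-$1$ allocation fixed by Lemma~\ref{lemma:tefxidenticaldaysd1} and the subsequent normalization: $A^1_1$ contains the goods valued $0$ and $1$ (value $1$), and $A^1_2$ contains the good valued $3$ (value $3$). In round $2$ a fresh copy of the goods valued $0$, $1$, $3$ arrives. Since valuations are identical, it suffices to check the EFX inequality $v(A^2_i) \ge v(A^2_j\setminus\{g\})$ for both ordered pairs of agents. I would first decide where the new good valued $3$ goes, then the new good valued $1$; the new good valued $0$ is left arbitrary throughout, and I expect it never to change the conclusion.

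\emph{Step 1: the new $3$ must go to agent $1$.} Suppose instead it goes to agent $2$. Then $A^2_2$ contains two goods of value $3$, so $v(A^2_2)\ge 6$. Agent $1$ holds at most $1+1+0=2$. Removing any single good from $A^2_2$ still leaves value at least $3>2$, so EFX fails for agent $1$ toward agent $2$. This holds regardless of where the new goods valued $1$ and $0$ are placed.

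\emph{Step 2: the new $1$ must go to agent $2$.} Given Step 1, suppose the new $1$ also goes to agent $1$. Then $A^2_1$ has value $1+3+1=5$, and it contains the round-$1$ good valued $0$. Agent $2$ holds at most $3+0=3$. Removing that zero-valued good from $A^2_1$ leaves value $5>3$, so EFX fails for agent $2$ toward agent $1$. Hence the new $1$ goes to agent $2$, which is exactly the assignment asserted in the lemma.

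\emph{Step 3: sanity check that this assignment is feasible.} With these choices both bundles have value $4$, whichever agent receives the new zero-valued good. With equal values, $v(A^2_i)=4\ge v(A^2_j\setminus\{g\})$ holds trivially for every $g$, so the forced assignment is indeed EFX. Together with round $1$, the allocation is TEFX up to $t=2$, so the lemma describes the unique surviving branch rather than an empty one.

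The only point needing care is to make each violation in Steps 1 and 2 robust to the placement of the zero-valued goods. In Step 2 in particular, the witness good removed must be the old zero-valued good, which is guaranteed to sit in $A^2_1$ by the normalization from Lemma~\ref{lemma:tefxidenticaldaysd1}. I do not expect any further obstacle; this step is the main subtlety of an otherwise routine check.
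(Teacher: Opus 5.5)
Your proof is correct and is essentially the paper's argument. The paper enumerates the three alternative placements of the round-$2$ goods valued $1$ and $3$ and refutes each with the same witnesses you use: agent $2$ holding two goods valued $3$ against agent $1$'s value of at most $2$, and the round-$1$ zero-valued good as the removed good when agent $1$ takes both. Your version merges the paper's first two cases into one step and explicitly handles the round-$2$ zero-valued good and the feasibility of the forced assignment, both of which the paper leaves implicit.
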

    \begin{proof}
        \textbf{Case 1:} Assume agent \(1\) does not get any good, and agent \(2\) gets the good valued \(1\) and \(3\). Now, \(v_1(A_1) = 0 + 1 = 1\) and \(v_1(A_2) = 3 + 1 + 3 = 7\). It can be seen that \(v_1(A_1) < v_1(A_2 \backslash \{g\})\) for all \(g\) in \(A_2\). Therefore, we do not achieve TEFX in this case.
        
        \textbf{Case 2:} Assume agent \(1\) gets the good valued \(1\), and agent \(2\) gets the good valued \(3\). Now, \(v_1(A_1) = 0 + 1 + 1 = 2\) and \(v_1(A_2) = 3 + 3 = 6\). It can be seen that \(v_1(A_1) < v_1(A_2 \backslash \{g\})\) for all \(g\) in \(A_2\). Therefore, we do not achieve TEFX in this case.
        
        \textbf{Case 3:} Assume agent \(1\) gets the good valued \(1\) and \(3\), and agent \(2)\) does not get any good. Now, \(v_1(A_1) = 0 + 1 + 1 + 3 = 5\) and \(v_1(A_2) = 3\), so agent \(1\) does not envy agent \(2\), but agent \(2\) envies agent \(1\). Note that the removal of the good valued \(0\) in agent \(1\) does not eliminate the envy from agent \(2\) to agent \(1\). Therefore, we do not achieve TEFX in this case.
        
        This finishes the proof.
    \end{proof}
    \begin{table*}[h]
      \centering
      \caption{Lemma~\ref{lemma:tefxidenticaldaysd2} Different allocation illustration in round \(2\)}
      \label{tab:tefxidenticaldays2}
        \begin{tabular}{|c|c|c|}
          \hline
           & Agent 1 & Agent 2 \\
          \hline
          Successful & \(g_3\) (\(v(A^2_1) = 1 + 3 = 4\)) & \(g_2\) (\(v(A^2_2) = 3 + 1 = 4\)) \\
          \hline
          Case 1 & \(\emptyset\) (\(v(A^2_1) = 1\)) & \(g_2\), \(g_3\) (\(v(A^2_2) = 3 + 4 = 7\)) \\
          \hline
          Case 2 & \(g_2\) (\(v(A^2_1) = 1 + 1 = 2\)) & \(g_3\) (\(v(A^2_2) = 3 + 3 = 6\)) \\
          \hline
          Case 3 & \(g_2\), \(g_3\) (\(v(A^2_1) = 1 + 1 + 3 = 5\)) & \(\emptyset\) (\(v(A^2_2) = 3\)) \\
          \hline
        \end{tabular}
    \end{table*}
    \begin{proposition}
        \label{prop:tefxidenticaldays}
        The addition of a good to a previously non-TEFX allocation where it is zero-valued for every agent does not make it TEFX.
    \end{proposition}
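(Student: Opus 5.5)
We argue directly from the definition of EFX, viewed as a statement about a single allocation. Suppose $A$ is an allocation that is not EFX, and let $g_0$ be a good with $v_k(g_0)=0$ for every agent $k$. Let $B$ be obtained from $A$ by giving $g_0$ to some agent $\ell$; so $B_\ell = A_\ell \cup \{g_0\}$ and $B_k = A_k$ for $k \neq \ell$. We must show that $B$ is not EFX. The whole point of the argument is to carry a witness of the EFX violation in $A$ over to a witness of a violation in $B$.

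Since $A$ is not EFX, fix agents $i \neq j$ and a good $g \in A_j$ with $v_i(A_i) < v_i(A_j \setminus \{g\})$. By additivity and $v_i(g_0)=0$, every bundle keeps its value when $g_0$ is added: $v_i(B_k) = v_i(A_k)$ for all $k$. Moreover, since $g \neq g_0$, we have $B_j \setminus \{g\} = (A_j \setminus \{g\}) \cup (B_j \setminus A_j)$, so $v_i(B_j \setminus \{g\}) = v_i(A_j \setminus \{g\})$. We now split according to who receives $g_0$.

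\textbf{Case $\ell \neq j$.} The bundle $B_j$ equals $A_j$, so $g \in B_j$. Agent $i$'s own value is unchanged even if $\ell = i$. Hence $v_i(B_i) = v_i(A_i) < v_i(A_j\setminus\{g\}) = v_i(B_j \setminus \{g\})$, and the same triple $(i,j,g)$ witnesses the violation in $B$.

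\textbf{Case $\ell = j$.} Now $g \in A_j \subseteq B_j$, and $v_i(B_j \setminus \{g\}) = v_i(A_j \setminus \{g\}) + v_i(g_0) = v_i(A_j\setminus\{g\})$. So the same inequality again holds for $(i,j,g)$.

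In both cases $B$ violates EFX, so the addition cannot repair the allocation. The main subtlety, and the step to state explicitly, is that the EFX definition quantifies over \emph{any} good in $B_j$. Adding $g_0$ only enlarges the set of goods that can be removed, so it can only make EFX harder to satisfy. Our witness $g$ is an old good, which still lies in the enlarged bundle. For the temporal version, the same argument applies at each round. If $A^{t}$ is not EFX, then $A^{t}$ with added zero-valued goods is still not EFX, so the resulting sequence is still not TEFX. Applying the argument one good at a time also handles several zero-valued goods, such as the good valued $0$ arriving each day in the instance above.
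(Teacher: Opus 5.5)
Your proof is correct and uses the same idea as the paper's: since $v_k(g_0)=0$ for every agent $k$, no agent's value for any bundle changes, so an existing EFX violation survives. Your version is more complete than the paper's. The paper phrases non-EFX as envy surviving the removal of \emph{every} $g \in A_j$, which is stronger than the actual negation, and it only treats the case where the zero-valued good goes to the envious agent $i$. You instead use a single existential witness $(i,j,g)$ and also cover the good going to $j$ (where the old $g$ still lies in the enlarged bundle) or to a third agent, which is what the later application to Lemma~\ref{lemma:tefxidenticaldaysd3} actually needs.
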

    \begin{proof}
        If an allocation is TEFX, for all agent \(j\), \(v_i(A_i) \geq v_i(A_j \backslash \{g\}), \forall g \in A_j\). Assume agent \(i\) envies agent \(j\), and \(v_i(A_i) < v_i(A_j \backslash \{g\})\) for all \(g\) in \(A_j\). If the good is allocated to agent \(i\),
        \[
        v_i(A_i) + 0 = v_i(A_i) < v_i(A_j \backslash \{g\}), \forall g \in A_j \text{.}
        \]
        Therefore, adding this zero-valued good does not convert a previously non-TEFX allocation into a TEFX.
    \end{proof}
    Hence, we first ignore the good valued \(0\) at \(t = 2\) and the good valued \(0\) at \(t = 3\), by showing that TEFX cannot be achieved at \(t = 3\) without both of these goods. Adding both goods back will not make us achieve TEFX either.
    \begin{lemma}
        \label{lemma:tefxidenticaldaysd3}
        TEFX cannot be achieved at \(t = 3\).
    \end{lemma}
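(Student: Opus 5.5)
The plan is to carry the forced allocation from Lemmas~\ref{lemma:tefxidenticaldaysd1} and~\ref{lemma:tefxidenticaldaysd2} into round $3$, discard the zero-valued goods of rounds $2$ and $3$, and show that every split of the remaining two goods breaks EFX.

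\textbf{Step 1: the state entering round 3.} Ignoring the zero-valued good of round $2$, the state is forced:
\begin{itemize}
\item agent $1$ holds the day-$1$ goods valued $0$ and $1$ and the day-$2$ good valued $3$, so $v(A_1)=4$;
\item agent $2$ holds the day-$1$ good valued $3$ and the day-$2$ good valued $1$, so $v(A_2)=4$.
\end{itemize}
The day-$1$ zero-valued good matters. It sits in agent $1$'s bundle, so removing it from $A_1$ costs nothing. This is the lever the argument uses.

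\textbf{Step 2: the four splits of the round-3 goods.} Among the non-zero goods, round $3$ adds one good valued $1$ and one valued $3$. There are only four ways to split them, and I check each against the EFX condition.
\begin{itemize}
\item Both go to agent $1$. Then $v(A_1)=8$ and $v(A_2)=4$. Removing the zero good from $A_1$ leaves $8>4$, so agent $2$'s EFX fails.
\item Both go to agent $2$. Then $v(A_2)=8$ and $v(A_1)=4$. Removing the cheapest good (value $1$) from $A_2$ leaves $7>4$, so agent $1$'s EFX fails.
\item Agent $1$ gets the $1$ and agent $2$ gets the $3$. Then $v(A_1)=5$ and $v(A_2)=7$. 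Removing a good valued $1$ from $A_2$ leaves $6>5$, so agent $1$'s EFX fails.
\item Agent $1$ gets the $3$ and agent $2$ gets the $1$. Then $v(A_1)=7$ and $v(A_2)=5$. Removing the zero good from $A_1$ leaves $7>5$, so agent $2$'s EFX fails.
\end{itemize}
In every case some agent $i$ has $v(A_i)<v(A_j\setminus\{g\})$ for the minimum-value good $g\in A_j$. For identical additive valuations, this is exactly a violation for all $g\in A_j$.

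\textbf{Step 3: restoring the ignored zero-valued goods.} This is the step I expect to need the most care. The issue is that Proposition~\ref{prop:tefxidenticaldays}, as proved, only treats adding the zero good to the envious agent. I will also argue the case where it goes to the envied agent $j$. Here $v(A_j)$ is unchanged, and every removal $A_j\cup\{z\}\setminus\{g\}$ has value at least $v(A_j\setminus\{g\})$ for the original minimum good $g$. Removing $z$ itself leaves $v(A_j)$, which is even larger. Hence the violation persists wherever the zero good is placed.

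Applying this argument twice, once for the zero good of round $2$ and once for that of round $3$, shows that TEFX fails at $t=3$ in every case. The witnessing violations from Step 2 never relied on the absence of these goods, so this completes the proof.
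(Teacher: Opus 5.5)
Your proposal is correct and follows the paper's proof essentially step for step: the same forced state entering round~$3$, the same four splits of the non-zero round-$3$ goods with the same witnessing removals, and the same restoration of the zero-valued goods via Proposition~\ref{prop:tefxidenticaldays}. Your extra argument for the zero good landing in the envied agent's bundle correctly patches a case the paper's proof of that proposition omits. One wording caution: read your remark that the minimum-good failure is ``a violation for all $g\in A_j$'' as failure of the universally quantified EFX condition, not as the inequality failing for every $g$. The latter is false: in the third split, removing a $3$ from agent~$2$'s bundle leaves $4<5$. A single witnessing good is all that Steps~2 and~3 require.
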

    \begin{proof}
        \textbf{Case 1:} Agent \(1\) does not get any good, and agent \(2\) gets the good valued \(1\) and \(3\). Now, \(v_1(A_1) = 0 + 1 + 3 = 4\), and \(v_1(A_2) = 1 + 3 + 1 + 3 = 8\). Agent \(1\) still envies agent \(2\) after removing any arbitrary good from agent \(2\)'s allocation. Therefore, we do not achieve TEFX in this case.
        
        \textbf{Case 2:} Agent \(1\) gets the good valued \(1\), and agent \(2\) gets the good valued \(2\). Now, \(v_1(A_1) = 0 + 1 + 3 + 1 = 5\), and \(v_1(A_2) = 1 + 3 + 3 = 7\). The removal of the good valued \(1\) from agent \(2\) does not eliminate the envy from agent \(1\) towards agent \(2\). Therefore, we do not achieve TEFX in this case.
        
        \textbf{Case 3:} Agent \(1\) gets the good valued \(3\), and agent \(2\) gets the good valued \(1\). Now, \(v_1(A_1) = 0 + 1 + 3 + 3 = 7\), and \(v_1(A_2) = 1 + 3 + 1 = 5\). The removal of the good valued \(0\) from agent \(2\) does not eliminate the envy from agent \(2\) towards agent \(1\).
        
        \textbf{Case 4:} Agent \(1\) gets the good valued \(1\) and \(3\), and agent \(2\) does not get any good. Now, \(v_1(A_1) = 0 + 1 + 3 + 1 + 3 = 8\), and \(v_1(A_2) = 1 + 3 = 4\). It can be seen that no matter which good we remove from agent \(1\), agent \(2\) still envies agent \(1\). Therefore, we do not achieve TEFX in this case.
    \end{proof}
    \begin{table*}[h]
      \centering
      \caption{Lemma~\ref{lemma:tefxidenticaldaysd3} Different allocation illustration in round \(3\)}
      \label{tab:tefxidenticaldays3}
        \begin{tabular}{|c|c|c|}
          \hline
           & Agent 1 & Agent 2 \\
          \hline
          Case 1 & \(\emptyset\) (\(v(A^3_1) = 4\)) & \(g_2\), \(g_3\) (\(v(A^3_2) = 4 + 4 = 8\)) \\
          \hline
          Case 2 & \(g_2\) (\(v(A^3_1) = 4 + 1 = 5\)) & \(g_3\) (\(v(A^3_2) = 4 + 3 = 7\)) \\
          \hline
          Case 3 & \(g_3\) (\(v(A^3_1) = 4 + 3 = 7\)) & \(g_2\) (\(v(A^3_2) = 4 + 1 = 5\)) \\
          \hline
          Case 4 & \(g_2\), \(g_3\) (\(v(A^3_1) = 4 + 4 = 8\)) & \(\emptyset\) (\(v(A^3_2) = 4\)) \\
          \hline
        \end{tabular}
    \end{table*}
    Then, by Proposition~\ref{prop:tefxidenticaldays}, after adding the zero-valued goods, Lemma~\ref{lemma:tefxidenticaldaysd3} still holds true. Thus, our result follows.

\subsection{Algorithm and Proof of Theorem~\ref{thm:atefxidenticaldays}}
\begin{algorithm}
    \caption{MAX-ECE}
    \label{alg:MAX-ECE}
    \textbf{Input}: Arbitrary fair division instance \(\mathcal{I} = \left( N, M, v = (v_1, \ldots, v_n)\right)\)\\
    \textbf{Output}: Allocation \(\mathcal{A}\)
\begin{algorithmic}[1]
    \For{$i \in N$}
        \State $A_i \gets \emptyset$
    \EndFor
    \For{$l = 1$ to $m$}
        \While{there does not exist an unenvied agent}
            \State Find an envy-cycle \(C = \left(i_0, \dots, i_{d-1}\right)\)
            \State \(\forall 0 \leq j < d\), assign the allocation of \(i_{j+1 \text{mod} d}\) to \(i_j\).
        \EndWhile
        \State Let $i$ be an unenvied agent
        \State Let $g^* = arg max_{g \in M} v_i(g)$
        \State $A_i \gets A_i \cup \{g^*\}$
        \State $M \gets M \backslash \{g^*\}$
    \EndFor
\end{algorithmic}
\end{algorithm}
\begin{algorithm}
    \caption{Returns a \(\frac{1}{2}\)-TEFX allocation under the identical days setting when \(n = 2\)}
    \label{alg:atefxidenticaldays}
    \textbf{Input}: Identical days temporal fair division instance \(\mathcal{I} = \left( N, T, \left\{ O_t \right\}_{t \in [T]}, v = (v_1, \ldots, v_n)\right)\)\\
    \textbf{Output}: \(\frac{1}{2}\)-TEFX allocation \(\mathcal{A}\)
\begin{algorithmic}[1]
    \For{$t = 1$ to $T$}
        \If{$t$ is odd}
            \State $P, Q \gets$ MAX-ECE$(N, O_t, \{v_1, v_1\})$
            \If{$v_2(P) \geq v_2(Q)$}
                \State $A^t_1 \gets A^{t - 1}_1 \cup Q$
                \State $A^t_2 \gets A^{t - 1}_2 \cup P$
            \Else
                \State $A^t_1 \gets A^{t - 1}_1 \cup P$
                \State $A^1_2 \gets A^{t - 1}_2 \cup Q$
            \EndIf
        \Else
            \State $P, Q \gets$ MAX-ECE$(N, O_t, \{v_2, v_2\})$
            \If{$v_1(P) \geq v_1(Q)$}
                \State $A^t_1 \gets A^{t - 1}_1 \cup P$
                \State $A^1_2 \gets A^{t - 1}_2 \cup Q$
            \Else
                \State $A^t_1 \gets A^{t - 1}_1 \cup Q$
                \State $A^t_2 \gets A^{t - 1}_2 \cup P$
            \EndIf
        \EndIf
    \EndFor
\end{algorithmic}
\end{algorithm}
    The polynomial runtime of Algorithm~\ref{alg:atefxidenticaldays} is easy to verify, given that there is only one \textbf{For} loop, which runs in linear time, MAX-ECE runs in polynomial time \cite{lipton2004}, and the other operations run in polynomial time. Thus, we focus on proving correctness.
    
    Consider agent \(1\). First, we address the special case where in each round, only one good arrives. This case is trivial: each agent is allocated a copy of the identical good, yielding a TEFX allocation. Thus, we consider the case where, in each round, more than two goods arrive. Assume we have allocated \(t\) rounds of items. By design, the algorithm allocates the ``same bundle" to the same agent in odd and even rounds, respectively. Denote the allocation that is allocated to agent \(i\) in odd rounds as \(A^{odd}_i\), and in even rounds as \(A^{even}_i\), where \(A^t_i = A^{odd}_i \cup A^{even}_i\).
    
    \textbf{Case 1:} \(t\) is even. Let \(k = \frac{t}{2}\). First, for even rounds \(t_{even}\), we have
    \[
    v_1(A^{even}_1) \geq v_1(A^{even}_2 \backslash \{g\}),  g = \arg \min_{g' \in A^{even}_2} v_1(g') \text{.}
    \]
    In addition, the value of \(g\) can be described as
    \[
    v_1(g) \leq v_1(A^{even}_2) \cdot \frac{1}{2} \text{.}
    \]
    If \(v_1(g) > v_1(A^{even}_2) \cdot \frac{1}{2}\), it either means \(A^{even}_2 = \{g\}\) or there exists a good \(g^*\) s.t. \(v_1(g) > v_1(g^*)\). For the first case, it is not possible, as this is an even round; therefore, at least one round has passed before this, which implies agent \(2\) must be allocated at least two goods. In the second case, to maintain EFX in MAX-ECE, we must allocate at least one good to each agent. There exists a good \(g^*\) s.t. \(v_1(g) > v_1(g^*)\), which means \(g\) is not the minimum valued good for agent \(1\) in the allocation \(A^{even}_2\).
    
    For odd rounds \(t_{odd}\), we have
    \[
    v_1(A^{odd}_1) \geq v_1(A^{odd}_2) \text{.}
    \]
    Therefore, we have
    \[
    k \cdot v_1(A^{odd}_1 \cup A^{even}_1) \geq k \cdot v_1(A^{odd}_2 \cup A^{even}_2) - k \cdot v_1(g) \text{.}
    \]
    By substituting \(v_1(A^{even}_2) \cdot \frac{1}{2}\) into \(v_1(g)\) through the inequality and simplifying the expression, we have
    \[
    v_1(A^{odd}_1 \cup A^{even}_1) \geq \frac{1}{2} \cdot v_1(A^{odd}_2 \cup A^{even}_2) \text{.}
    \]
    \textbf{Case 2:} \(t\) is odd. Notice that since \(v_1(A^{odd}_1) \geq v_1(A^{odd}_2)\), we obtain a result not worse than \textbf{Case 1}, which means it is also \(\frac{1}{2}\)-TEFX.
    
    A similar proof can be obtained for agent \(2\).

\subsection{Proof of Theorem~\ref{thm:tmmsidenticaldays}}
    Consider the instance with two agents and three goods that arrive in each round \(t\), \(O_t = \{g^t_1, g^t_3, g^t_{10}\}\), where agents have identical valuations: \(v(g^t_1) = 1, v(g^t_3) = 3, v(g^t_{10}) = 10\).
    
    In round \(1\), in order to obtain a TMMS allocation, each agent must be allocated either \(\{g^1_1, g^1_3\}\) or \(\{g^1_{10}\}\).
    
    In round \(2\), each agent must be allocated either \(\{g^1_1, g^1_3, g^1_{10}\}\) or \(\{g^2_1, g^2_3, g^2_{10}\}\) to obtain a TMMS allocation.
    
    In order for the allocation at the end of round \(3\) to be TMMS, one agent must be allocated \(\{g^1_1, g^2_1, g^1_3, g^2_3, g^3_3, g^1_{10}\}\), and the other agent must be allocated \(\{g^3_1, g^2_{10}, g^3_{10}\}\), which is not possible, as \(g^1_3\) and \(g^2_3\) are allocated to different agents in round \(2\).

\subsection{Algorithm and Proof of Theorem~\ref{thm:atefxidenticalvaluation}}
\begin{algorithm}
    \caption{Returns a \(\frac{\min_{g \in O, v(g) > 0} v(g)}{\max_{g \in O} v(g)}\)-TEFX allocation under the identical valuation setting}
    \label{alg:atefxidenticalvaluation}
    \textbf{Input}: Identical valuation temporal fair division instance \(\mathcal{I} = \left( N, T, \left\{ O_t \right\}_{t \in [T]}, v = (v_1, \ldots, v_n)\right)\)\\
    \textbf{Output}: \(\frac{\min_{g \in O, v(g) > 0} v(g)}{\max_{g \in O} v(g)}\)-TEFX allocation \(\mathcal{A}\)
\begin{algorithmic}[1]
    \For{$t = 1$ to $T$}
        \State $A^t \gets A^{t - 1}$
        \For{$g \in O_t$}
            \If{$v(g) = 0$}
                \State $A^t_n \gets A^t_n \cup \{g\}$
            \Else
                \For{$i = 1$ to $n$}
                \If{$i = \arg \min_{j \in N} v(A^t_j)$}
                    \State $A^t_i \gets A^t_i \cup \{g\}$
                \EndIf
            \EndFor
            \EndIf
        \EndFor
    \EndFor
\end{algorithmic}
\end{algorithm}
    The polynomial runtime of Algorithm~\ref{alg:atefxidenticalvaluation} is easy to verify, given that there are only nested \textbf{For} loops, which run in polynomial time, and the other operations run in polynomial time. 
    
    The tightness is trivial. Consider the instance in Lemma~\ref{thm:atefxconst}. Thus, we focus on proving correctness.

    Let us first address the special case. If for all goods \(g\), \(v(g) = 0\), no matter how we allocate the goods, we result in a TEFX allocation. Therefore, we assume \(\exists g \in O\) s.t. \(v(g) > 0\).

    For each agent, we multiply each good's value such that the minimum positively valued good has a value equal to one.
    
    Denote \(g_{max} = \arg \max_{g \in O} v(g)\), and \(g_{min}\) analogously.
    \begin{lemma}
        \label{lemma:atefxidenticalvaluation1}
        At all times, for all pairs of agents \(i\), \(j\), \(|v(A^t_i) - v(A^t_j)| \leq v(g_{max})\).
    \end{lemma}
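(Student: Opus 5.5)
We prove the bound by induction on the number of goods processed by Algorithm~\ref{alg:atefxidenticalvaluation}, treating goods one at a time in the order the inner loop handles them. Since the statement concerns every intermediate allocation, we prove the stronger claim that after each individual good is allocated, every pair of agents satisfies $|v(A_i)-v(A_j)|\le v(g_{max})$. The statement for each round $t$ then follows at once. The base case is the empty allocation, where all values are $0$.

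For the inductive step there are two kinds of goods.

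\emph{Zero-valued goods.} If $v(g)=0$, then $g$ goes to agent $n$ and no bundle value changes. The bound carries over unchanged.

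\emph{Positive-valued goods.} If $v(g)>0$, then $g$ goes to an agent $i$ with $v(A_i)=\min_{k\in N} v(A_k)$. Only $i$'s value changes, from $v(A_i)$ to $v(A_i)+v(g)$, so only pairs involving $i$ need to be checked.
\begin{itemize}
\item \emph{Pairs where $i$ is now higher.} For any $j$, minimality of $i$ before the allocation gives $v(A_i)\le v(A_j)$. Hence
\[
v(A_i)+v(g)-v(A_j)\le v(g)\le v(g_{max}).
\]
\item \emph{Pairs where $j$ is still higher.} Here $v(A_j)-v(A_i)-v(g)$ is at most $v(A_j)-v(A_i)$, which is at most $v(g_{max})$ by the inductive hypothesis.
\end{itemize}
Combining the two cases gives $|v(A_i)-v(A_j)|\le v(g_{max})$ for every $j$ after the allocation. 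Pairs not involving $i$ are untouched, which closes the induction.

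The only potential subtlety is tie-breaking in the inner \textbf{for} loop over $i$. That loop must assign $g$ to exactly one minimum-value agent. Otherwise, if several agents attain the minimum, one could read it as giving the same good to several agents. We will interpret the loop as selecting the first agent attaining the minimum. Under that reading the argument above applies verbatim, since it uses only the minimality of the recipient. This is the main point to state carefully; the rest is routine arithmetic. Note also that the rescaling that makes the minimum positive value equal to $1$ is a common positive factor, so it does not affect the inequality.
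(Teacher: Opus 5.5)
Your proposal is correct and follows essentially the same route as the paper, which argues by contradiction at the first allocation that would break the bound: the recipient had minimum value, so its new lead over any $j$ is at most $v(g^*) \le v(g_{max})$. Your inductive version is tidier. It explicitly covers zero-valued goods and pairs where $j$ stays ahead, which the paper's proof omits. It also justifies the key step via $v(A_i) \le v(A_j)$, whereas the paper appeals to $v(A_i)-v(A_j) \ge -v(g_{max})$, an inequality in the wrong direction for that step.
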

    \begin{proof}
        We prove by contradiction. Assume at some arbitrary time, there exists a pair of agents \(i\) and \(j\) such that \(|v(A^t_i) - v(A^t_j)| \leq v(g_{max})\). After the allocation of good \(g^*\) to agent \(i\), \(\exists j \in [n]\) s.t. \(|v(A^t_i) - v(A^t_j)| > v(g_{max})\). Before the allocation of \(g^*\), if \(v(A^t_i) > v(A^t_j)\), then the good will not be allocated to agent \(i\), thus \(v(A^t_i) < v(A^t_j)\). Therefore, we have
        \[
        v(A^t_j) - v(A^t_i) \leq v(g_{max})
        \]
        before \(g^*\) is allocated to agent \(i\). If \(v(g^*) \leq v(g_{max})\), 
        \[
        v(A_i) - v(A_j) + v(g^*) \leq v(g_{max}) \text{,}
        \]
        since \(v(A_i) - v(A_j) \geq -v(g_{max})\), which contradicts our assumption. Thus, \(v(g^*) > v(g_{max})\). However, this means \(g_{max} \neq \arg \max_{g \in O} v(g)\), which contradicts our definition of \(g_{max}\).
    \end{proof}
    By Lemma~\ref{lemma:atefxidenticalvaluation1}, we know that the difference between any two agents at any time must be less than or equal to the maximum valued good. Without loss of generality, assume \(v(A^t_i) \geq v(A^t_j) > 0\). Denote \(d = v(A^t_i) - v(A^t_j) \leq v(g_{max})\). Let \(v(A^t_j) \geq k \cdot \left(v(A^t_i) - v(g')\right)\) where \(k > 0\), and \(g' \in A^t_i\). Then, we have
    \[
    k \leq \frac{v(A^t_j)}{d + v(A^t_j) - g'} \text{.}
    \]
    \(k\) attains its maximum value when it equals \(\frac{v(A^t_j)}{d + v(A^t_j) - v(g')}\). Notice that \(k \geq \frac{v(g_{min})}{d}\), since \(v(A^t_j) \geq v(g_{min})\). Thus, we have proven Theorem~\ref{thm:atefxidenticalvaluation} when \(v(A^t_j) > 0\). Lastly, we consider the case when \(v(A^t_j) = 0\).
    \begin{lemma}
    \label{lemma:atefxidenticalvaluation2}
        We have a EFX allocation when \(\exists i \in [n], v(A^t_i) = 0\).
    \end{lemma}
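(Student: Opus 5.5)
The plan is to extract the structure of $A^t$ forced by Algorithm~\ref{alg:atefxidenticalvaluation} whenever some agent still has value $0$, and then check the EFX inequalities directly from that structure. I will read the inner loop ``\textbf{for} $i=1$ to $n$, \textbf{if} $i=\arg\min_j v(A^t_j)$'' as breaking ties toward the smallest index. I take this to be the intended deterministic rule, and I will state the assumption explicitly since the argument needs it.

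\textbf{Step 1 (each agent holds at most one positive good).} Bundle values never decrease over time. Hence if $v(A^t_i)=0$ for some $i$, then $\min_j v(A^{s}_j)=0$ at every earlier moment $s$. Every positively valued good is given to a current minimum-value agent, so it was received by an agent whose bundle was worth $0$ at that moment. Consequently no agent ever receives a second positive good while some agent is at $0$. It follows that at time $t$ every agent holds at most one good $g$ with $v(g)>0$.

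\textbf{Step 2 (agent $n$ is the last to receive a positive good).} Consider the moment a positive good is handed to agent $n$. With smallest-index tie-breaking, agent $n$ can be the arg-min of value $0$ only if agents $1,\dots,n-1$ all already have positive value. After agent $n$ receives this good, every agent has positive value, which contradicts the hypothesis $\exists i,\ v(A^t_i)=0$. So under our hypothesis agent $n$ holds no positive good, and $v(A^t_n)=0$. Since only agent $n$ ever receives zero-valued goods, every other agent $k\neq n$ holds either nothing or exactly one positively valued good.

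\textbf{Step 3 (verify EFX).} Fix agents $j,k$ and a good $g\in A^t_k$.
\begin{itemize}
\item If $k\neq n$, then $A^t_k=\{g\}$, so $v(A^t_k\setminus\{g\})=0\le v(A^t_j)$.
\item If $k=n$, every good in $A^t_n$ has value $0$, so again $v(A^t_n\setminus\{g\})=0\le v(A^t_j)$.
\end{itemize}
Hence $A^t$ is EFX, which is the claim.

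\textbf{Main obstacle.} The delicate point is Step~2. If the arg-min were allowed to give agent $n$ a positive good while some other agent is still at $0$, then agent $n$ could hold one positive good together with zero-valued goods. Removing a zero-valued good from $A^t_n$ would then leave envy from a value-$0$ agent, and EFX would fail. The proof therefore hinges on pinning down the tie-breaking rule of the loop (smallest index first). If the paper intends a different rule, I would instead modify the algorithm to route zero-valued goods to the agent that receives positive goods last, and rerun the same argument.
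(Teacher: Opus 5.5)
Your proof is correct and follows the same route as the paper: while some agent is still at value $0$, positively valued goods go to zero-value agents in index order. Hence every agent other than $n$ holds at most one good, agent $n$ holds only zero-valued goods, and EFX is immediate. Your explicit smallest-index tie-breaking, and your allowance for agents who hold nothing yet, state more carefully what the paper's phrase ``with respect to the index of the agents'' relies on.
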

    \begin{proof}
        Notice that Algorithm~\ref{alg:atefxidenticalvaluation} first allocates exactly one positively valued good to each agent. More precisely, from agent \(1\) to agent \(n - 1\), each agent is allocated exactly one good, and agent \(n\) is allocated exactly one positively valued good, and multiple, or possibly none, zero-valued good(s). This means for all agents \(i \in [n]\), \(\forall j \in [n - 1]\),
        \[
        v(A^t_i) \geq v(A^t_j \backslash \{g\}) = v(\emptyset) = 0, \forall g \in A^t_j \text{.}
        \]
        Lastly, since we allocate the positively valued goods with respect to the index of the agents, before the allocation of the first positively valued good to agent \(n\), no agent will envy agent \(n\). After the allocation of this good, for all agent \(i\), \(v(A^t_i) > 0\).
    \end{proof}
    Our result follows.

\subsection{Proof of Theorem~\ref{thm:tmmsidenticalvaluation}}
Consider the instance with two agents and three goods \(O = \{g_1, g_2, g_3\}\), where agents have identical valuations: \(v(g_1) = v(g_2) = 1\) and \(v(g_3) = 2\), and \(r \leq T - 1\). \(O_1 = \{g_1, g_2\}\), \(\forall 1 < t < T, O_t = \emptyset\), and \(O_T = \{g_3\}\). 
    
    In round \(1\), to obtain a TMMS allocation, each agent must be allocated either \(g_1\) or \(g_2\). Without loss of generality, assume agent \(1\) is allocated \(g_3\). However, notice that in order to obtain a TMMS allocation, each agent must be allocated either \(\{g_1, g_2\}\) or \(\{g_3\}\), which is impossible to achieve.

\subsection{Algorithm and Proof of Theorem~\ref{thm:atefxbivalued}}
\begin{algorithm}
    \caption{Returns a \(\frac{a}{b}\)-TEFX allocation.}
    \label{alg:atefxbivalued}
    \textbf{Input}: Bi-valued temporal fair division instance \(\mathcal{I} = \left( N, T, \left\{ O_t \right\}_{t \in [T]}, v = (v_1, \ldots, v_n)\right)\)\\
    \textbf{Output}: \(\frac{a}{b}\)-TEFX allocation \(\mathcal{A}\)
\begin{algorithmic}[1]
    \State $i \gets 0$
    \For{$t = 1$ to $T$}
        \State $A^t \gets A^{t - 1}$
        \While{$O_t \neq \emptyset$}
            \State $g \gets \arg\max_{g' \in O_t} v_i(g')$
            \State $A^t_i \gets A^t_i \cup \{g\}$
            \State $i \gets i + 1$
            \State $O_t \gets O_t \setminus \{g\}$
        \EndWhile
    \EndFor
\end{algorithmic}
\end{algorithm}
\begin{proof}
We first prove its correctness, then prove its tightness.

First, notice that for all goods \(g_1\) and \(g_2\), for any agent \(i\),
\[
\frac{b}{a} \cdot v_i(g_1) \geq v_i(g_2)\text{.}
\]
By the design of Algorithm~\ref{alg:atefxbivalued}, for any pair of agents \(i\) and \(j\), \(|A_i| = |A_j|\) or \(|A_i| = |A_j| - 1\).

\textbf{Case 1: \(|A_i| = |A_j|\).} There exists a perfect matching between every \(g_1 \in A_i\) and \(g_2 \in A_j\). Therefore, 
\[
\sum_{g_1 \in A_i} \frac{b}{a} \cdot v_i(g_1) \geq \sum_{g_2 \in A_j} v_i(g_2) \text{,}
\]
where we have
\[
\frac{b}{a} \cdot v_i(A_i) \geq v_i(A_j) \;\Rightarrow\; v_i(A_i) \geq \frac{a}{b} \cdot v_i(A_j) \text{.}
\]

\textbf{Case 2: \(|A_i| = |A_j| - 1\)} By removing an arbitrary good from \(A_j\), we return to \textbf{Case 1}. Therefore, for all \(g\) in \(A_j\), 
\[
v_i(A_i) \geq \frac{a}{b} \cdot v_i(A_j \setminus \{g\}) \text{.}
\]

Now, we prove its tightness using an example. Consider the instance with two agents and three goods, spanning two rounds, where the agents have identical valuations. \(O_1 = \{g_1, g_2\}\) and \(O_2 = \{g_3\}\). \(v(g_1) = v(g_2) = a\) and \(v(g_3) = b\).

After \(t = 1\), each agent must be allocated each of the goods, respectively, to obtain a positive approximation of TEFX. After \(t = 2\), without loss of generality, assume agent \(1\) obtains the good. \(v_1(A_1) = a + b \geq v_1(A_2) = a\). On the other hand, after removing any good from \(A_1\), the worst case is removing the good valued \(a\) for agent \(2\), where we have \(v_2(A_1) - a = b\). Since we have
\[
v_2(A_2) \geq \frac{a}{b} \cdot v_2(A_1) = \frac{a}{b} \cdot b = a \text{,}
\]
\(\frac{a}{b}\)-TEFX is tight.
\end{proof}

\end{document}